\newcommand{\overbar}[1]{\mkern 1.5mu\overline{\mkern-1.5mu#1\mkern-1.5mu}\mkern 1.5mu}
\title{Tune-In: Training Under Negative Environments with Interference for Attention Networks Simulating Cocktail Party Effect}
\author {
        Jun Wang\textsuperscript{\rm 1}, 
        Max W. Y. Lam\textsuperscript{\rm 1}, 
        Dan Su\textsuperscript{\rm 1}, 
        Dong Yu \textsuperscript{\rm 2}\\
}
\newtheorem{definition}{Definition}[section]
\newtheorem{assumption}{Assumption}[section]
\newtheorem{claim}{Claim}[section]
\begin{document}
\maketitle

\begin{abstract}
We study the cocktail party problem and propose a novel attention network called Tune-In, abbreviated for training under negative environments with interference. It firstly learns two separate spaces of speaker-knowledge and speech-stimuli based on a shared feature space, where a new block structure is designed as the building block for all spaces, and then cooperatively solves different tasks. Between the two spaces, information is cast towards each other via a novel cross- and dual-attention mechanism, mimicking the bottom-up and top-down processes of a human's cocktail party effect. It turns out that substantially discriminative and generalizable speaker representations can be learnt in severely interfered conditions via our self-supervised training. The experimental results verify this seeming paradox. The learnt speaker embedding has superior discriminative power than a standard speaker verification method; meanwhile, Tune-In achieves remarkably better speech separation performances in terms of SI-SNRi and SDRi consistently in all test modes, and especially at lower memory and computational consumption, than state-of-the-art benchmark systems.
\end{abstract}

\section{Introduction}
The cocktail party effect \cite{narayan2009cortical,Samuel2015cocktail,getzmann2016switch, Getzmann2015neuro} is the phenomenon of the ability of a human brain to focus one's auditory attention to ``tune in" a single voice and ``tune out" the competing others. Although how humans solve the cocktail party problem \cite{cherry1953some} remains unknown,
remarkable progress on speech separation (SS) has been made thanks to recent advances of deep-learning models. Ground-breaking successful models include the high-dimensional embedding-based methods, originally proposed as the deep clustering network (DPCL) \cite{hershey2016deep}, and its extensions such as DANet \cite{chen2017deep} and DENet \cite{wang2018deep}. More recently, the performance has been improved by the time-domain audio separation networks (TasNet) \cite{luo2018tasnet} and the time convolutional networks (TCNs) based Conv-TasNet \cite{bai2018empirical, luo2019conv,lam2020mixup}. Until very recently, 
State-of-the-art (SOTA) performance record of TCNs have been further advanced on several benchmark datasets by the dual-path RNN (DPRNN) \cite{luo2019dual} and the dual-path gated RNN (DPGRNN) \cite{Eliya2020Voice}.

Modern deep-learning models also evolve rapidly for speaker identification and verification tasks \cite{David2018xvector, liu2018sv, mirco2018sincnet}. A separately trained speaker identification network is adopted in \cite{Eliya2020Voice} to prevent channel swap for SS. Other emerging studies \cite{zeg2020wavesplit, bo2018spkspIJCAI, bo2018spkspAAAI, bo2019spksp, wang2019Voicefilter, zi2020tastas} suggest bridging the speaker identification and the SS tasks in a unified network. Especially, Wavesplit in \cite{zeg2020wavesplit} jointly trains the speaker stacks and the separation stack to outperform DPRNN and DPGRNN. Similar to \cite{lucas2018deepatt}, it is based on K-means clustering, which results in significantly higher computational cost \cite{Pariente2020Asteroid}. According to its reported parameter setting, the estimated model size is approximately $42.5$M, about $18$ times larger than ours. To the best of our knowledge, higher SOTA scores than DPRNN's were all achieved by remarkably larger models. For another example, \cite{Eliya2020Voice} proposed a model of $7.5$M parameters, about $3$ times larger than ours; still, our system outperforms theirs in terms of SI-SNRi \cite{le2019sdr} and SDRi \cite{bo2018spkspIJCAI} on a benchmark dataset. We believe that it is only meaningful when the scores are compared under a fair constraint on the model sizes, as what we followed carefully and convincingly throughout our experiments.

A neurobiology study in \cite{Samuel2015cocktail} examines the modulation of neural activity associated with the interference properties. It demonstrates that a competing speech is processed predominantly within the same pathway as the target speech in the left hemisphere but is not treated equivalently within that stream.
Another inspiring research \cite{getzmann2016switch} suggests an acceleration in attention switching and context updating when people have semantically cued changes (e.g., a word) in target speaker settings than uncued changes. These recent neurobiology studies \cite{Samuel2015cocktail,getzmann2016switch,Getzmann2015neuro,narayan2009cortical} together suggest that, when listening with interference, the neural processes are more sophisticated than the simplified presumption in the previous attention-based separation \cite{bo2018spkspIJCAI}.

Inspired by the above, we propose a novel attention network that entails fundamentally different bottom-up and top-down mechanisms in stark contrast to the prior arts. Our contribution lies mainly in four folds:
\begin{itemize}[leftmargin=*]
\item a ``globally attentive, locally recurrent" (GALR) architecture (Sec. \ref{sec:GALR}) that breaks the memory and computation bottlenecks of self-attention and permits its usage over very long sequences (e.g., raw waveforms of speech mixtures);
\item a novel cross- and dual-attention mechanism (Sec. \ref{sec:crsndual}), which enables the bottom-up and top-down casting of signal-level and speaker-level information in parallel;
\item a theoretically grounded contrastive estimation loss, namely the Tune-InCE loss (Sec. \ref{sec:losses}) that can significantly enhance the robustness and generalizability of the learnt speaker embedding;
\item a new training and inference framework (Sec. \ref{sec:train_infer}) that efficiently solves the speaker embedding and speech separation problems and achieves reciprocity between the two tasks. In particular, our system can perform speech verification (SV) directly on overlapped mixtures with a performance comparable to using clean utterances, as discussed in Sec. \ref{sec:results}.
\end{itemize}

Conventional SV approaches require a complicated pipeline including first a speech activity detection module to remove noisy or silent parts, followed by a segmentation module, of which the output short segments are then grouped into corresponding speakers by a clustering module. Besides, to deal with interfered signals, an overlap detector and classifier are needed to remove the overlapped parts. Recent efforts have been made \cite{huang2020sd} to simplify such complicated pipelines into one stage. Still, performances are severely hurt in highly interfered scenarios \cite{yusuke2019sd}. In contrast, our framework has a quite different principle so that it can be relieved from the long pipeline. We call it ``Tune-In", abbreviated for ``Training under negative environments with Interference" to mimic the cocktail party effect, where a cross-attention mechanism automatically retrieves and summarizes the most salient, relevant, and reliable speaker embedding from a non-purified embedding pool, while discarding the noisy, vacant, unreliable, or non-relevant parts. Therefore, the summarized embedding is expected to capture the speaker's characteristics that are most salient and relevant to the current observation.

``Tune-In" together with the ``Contrastive Estimation" loss (that is the ``Tune-InCE" loss) views the commonplace masking routine in SS from a novel perspective, as a self-supervised technique for robust representation learning, like the masking in BERT \cite{Jacob2019Bert}. As far as we know, our model is also the smallest among all SOTA SS models, with an $11.5\%$ model size reduction relative to the previously smallest model --- DPRNN. Our model also achieved the SOTA SI-SNRi and SDRi scores at a much lower cost of up to $62.9\%$ less run-time memory and $66.4\%$ fewer computational operations according to results in Sec. \ref{sec:results}, where more insights are discussed in the ablation studies. We also provide theoretical and empirical analysis in the Appendix.



\begin{figure*}[t!]
 \centering
    \includegraphics[width=\linewidth]{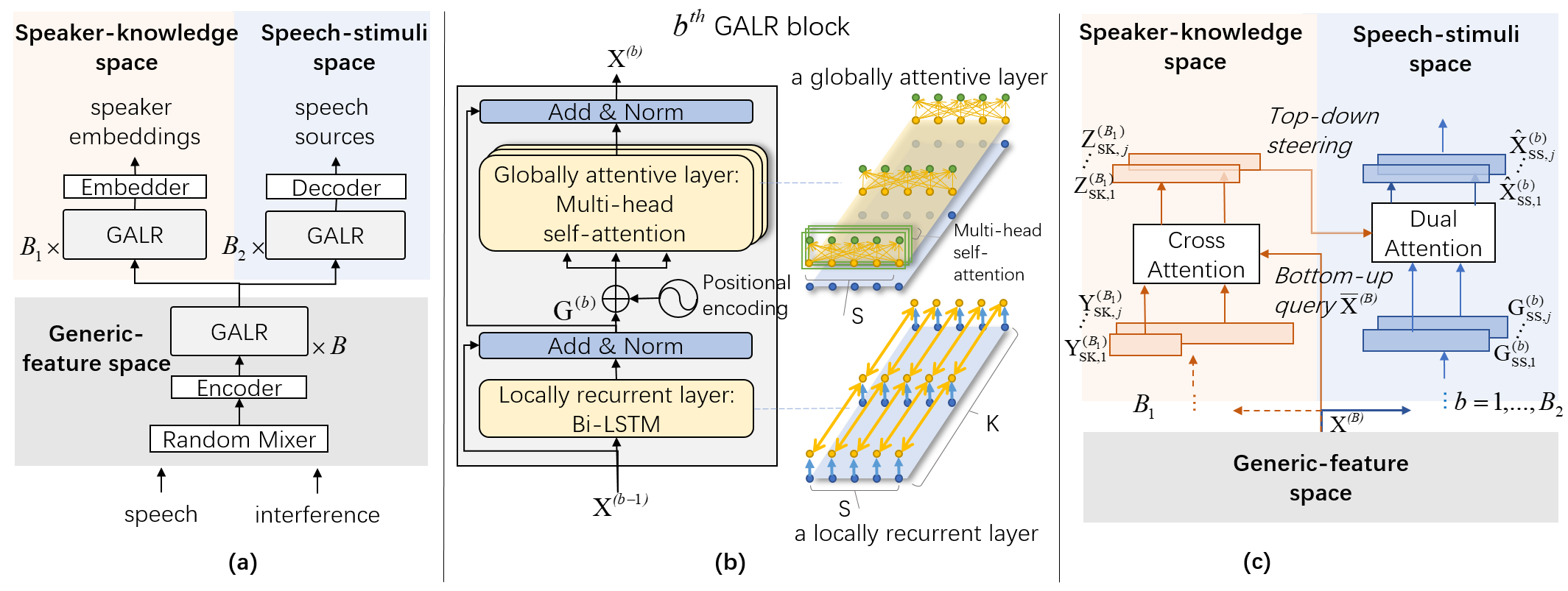}
\caption{(a) An overview of the proposed Tune-In framework, comprised of three spaces: a generic-feature space, a speaker-knowledge space, and a speech-stimuli space. (b) the structure inside a Globally Attentive, Locally Recurrent (GALR) block. (c) the bottom-up and top-down information casting between the two spaces via a cross- and a dual- attention mechanism.}
\label{fig:archi}
\end{figure*}
\section{Tune-In Framework}
\label{sec:tunein}
\subsection{Overview}
\label{sec:overview}
As shown in Fig. \ref{fig:archi} (a), the goal of the ``Speech-stimuli space" is to separate the $C$ sources from an input mixture $\mathbf{x}=\sum_{c=1}^{C}\mathbf{c}_i\cdot\mathbf{s}_c$, where $\mathbf{c}_i$ is a scaling factor to generate various signal-to-interference ratio (SIR) conditions (e.g., randomly from 0 to 5dB), and the sources are by $C$ speakers from the overall
$N$-speaker training corpus; the goal of the ``Speaker-knowledge space" is to extract speaker embedding to discriminate the $N$ speakers; the goal of the ``generic-feature space" is to extract shared generic features with discriminative and separable power among speakers and sources for both down-stream tasks.

The Encoder and the Decoder are the same as those in \cite{luo2019conv, luo2019dual} for comparison purposes, and we ask our readers to refer to these prior works for the detailed structure. The Encoder transforms the mixture $\mathbf{x}$ of the $C$ waveform sources into a sequential input $\mathbf{\Tilde{X}}\in \mathbb{R}^{D\times I}$, where $D$ is the feature dimension, and $I$ is the time-sequence length. We split $\mathbf{\Tilde{X}}$ into $S$ half-overlapping segments each of length $K$, and pack them into a 3-D tensor $\mathbf{X} \in \mathbb{R}^{D\times S \times K}$, and then pass it to a stack of 
blocks. These blocks share the novel structure namely the GALR structure. These GALR blocks stack together to accomplish tasks in all the spaces: 1) $B$ blocks in the generic-feature space constitutes the basis part that works on generic feature extraction. 2) $B_1$ and $B_2$ GALR blocks are fine-tuned towards different downstream tasks of speaker identification and speech separation, respectively.

\subsection{GALR Block}
\label{sec:GALR}
As shown in Fig. \ref{fig:archi} (b), we propose the GALR block structure to break the memory and computational bottleneck of self-attention mechanisms on tasks dealing with very long sequences. Each GALR block receives a 3-D tensor $\mathbf{X}^{(b-1)} \in \mathbb{R}^{D\times S \times K}$ and outputs $\mathbf{X}^{(b)}$ of the same shape, for $b = 1, ..., B$, where $B$ denotes the number of blocks in the generic-feature space. To avoid confusion, we denote $\mathbf{X}^{(b)}_\text{SK}$ as the output of GALR blocks in the speaker-knowledge (SK) space with $b = 1, ..., B_1$, and correspondingly $\mathbf{X}^{(b)}_\text{SS}$ for the speech-stimuli (SS) space with $b = 1, ..., B_2$. Each GALR block consists of two primary layers: a locally recurrent (LR) layer that performs intra-segment processing, and a globally attentive (GA) layer that performs inter-segment processing. 

\subsubsection{Locally Recurrent Layer}
\label{sec:lr}
Similar to DPRNN, we use RNNs for modeling local dependencies at the lower intra-segment context level, e.g., signal continuity, signal structure, etc, which are inherently important for waveform reconstruction. Consequently, a BiLSTM layer is responsible for modeling short-term dependencies within each segment and outputs  $\mathbf{L}^{(b)} \in \mathbb{R}^{D\times S \times K}$, where $
    \mathbf{L}_{s}^{(b)} = \text{LN}\left(\text{Lin}^{(b)}(\text{BiLSTM}^{(b)}(\mathbf{X}^{(b-1)}_{:,s,:}))\right)+\mathbf{X}^{(b-1)}_{:, s, :}$, for $s= 1,..., S$ refers to the local sequence within the $s^{th}$ segment, and $\text{BiLSTM}(\cdot)$, $\text{Lin}(\cdot)$, $\text{LN}(\cdot)$ denote a Bi-directional LSTM layer, a linear projection, and a layer normalization module \cite{ba2016layer}, respectively. 
\subsubsection{Globally Attentive Layer}
\label{sec:ga}
RNNs are far more sensitive to the nearby elements than to the distant ones. For example, an LSTM-based language modeling (LM) is capable of using about $200$ tokens of context on average, as revealed by \cite{khandelwal2018sharp}, but sharply distinguishes nearby context (recent $50$ tokens) from the distant history. As for speech, high temporal correlation, continuous acoustic-signal structure generally exist more often in intra-segment sequences ($0.25$ms- $2$ms granularity in our settings) than in inter-segment sequences ($64$ms- $512$ms granularity). This suggests that RNNs are potentially more suited for intra-segment modeling than for inter-segment modeling. Moreover, \cite{ravanelli2018light} discovered that RNNs reset the stored memory to avoid bias towards an ``unrelated'' history.

One strength of attention mechanisms over RNNs lies in a fully connected sequence processing strategy \cite{vaswani2017attention}, where every element is connected to other elements in a sequence via a direct path without any recursively processing, memorizing reset, or update mechanisms like RNNs. However, a conventional self-attention layer remains impractical to be applied in speech separation tasks, as very long speech signal sequences are involved, thus run-time memory consumption becomes a bottleneck. In terms of computational complexity, self-attention layers are also slower than recurrent layers when the sequence length is larger than the representation dimensionality, like most in speech separation.

Our proposed GALR block resorts to the self-attention mechanism \cite{vaswani2017attention} for inter-segment computation to model context-aware global dependencies. It revises DPRNN to better address long-range context modeling for audio separation, leading to a lower-cost, higher-performance structure. Since the hyperparameter $K$ in GALR controls the granularity of the local information modeled by the locally recurrent layer, the overall sequence length $I$ that the attention model deals with is reduced by a factor of $K$, which considerably relieves the run-time memory consumption. Note that self-attention has been used in spectrogram-based source enhancement by \cite{amir2019wildmix,yuma2020speech,jing2020DPTrans} recently, but never orthogonally across RNN-summarized frames, as in our proposed method. 
To validate that self-attention and RNN are better summarizers across segments and within the segment, respectively, we performed an ablation study in Appendix \textbf{5.3}. 

Moreover, the efficiency can be further improved by reducing the times of running the global attention mechanism, which is originally $K$ times over the $S$-length sequences in $\mathbf{L}^{(b)} \in \mathbb{R}^{D\times S \times K}$. We first pool the output of the locally recurrent model $\mathbf{L}^{(b)}$ through a $1\times 1$ convolution $\text{Conv2D}(\cdot)$ to map the $K$-dimension feature into $Q$ dimensions with $Q \ll K$. Then a layer normalization $\text{LN}_\text{D}(\cdot)$ \cite{ba2016layer} is applied along the feature dimension $D$. Subsequently, a positional embedding vector, $\mathbf{P}$, is added to the features:
$\mathbf{G}^{(b)} = \text{LN}_\text{D}\left(\text{Conv2D}(\mathbf{L}^{(b)})\right)+\mathbf{P}$, where $\mathbf{G}^{(b)} \in \mathbb{R}^{D\times S \times Q}$. Thus, the computation is reduced by a proportion of $(K-Q)/K$.
Then, we apply the multi-head self-attention $\text{Attn}(\cdot)$ to $\mathbf{G}^{(b)}$ following the encoder in Transformer \cite{vaswani2017attention}, of which the detailed equations are omitted here for their widely known operations. The output $\mathbf{\hat{X}}^{(b)} = \text{Attn}(\mathbf{G}^{(b)}) \in \mathbb{R}^{D\times S \times Q}$ goes through another 2D $1\times 1$ convolution to inversely map the $Q$ dimensions back to $K$ dimensions, yielding the final output of a GALR block $\mathbf{X}^{(b)} \in \mathbb{R}^{D\times S \times K}$. Likewise, each following block continues on the local-to-global and global-to-local (also fine-to-coarse and coarse-to-fine across the time axis) interactions.
\subsection{Self-supervised Learning with Cross- and Dual- Attention Mechanism}
\label{sec:crsndual}
On top of the generic-feature space, the following GALR blocks are tied to each downstream representation space. 
\subsubsection{Speech-stimuli Space}
\label{sec:spspace}
We want to learn deep speech representations for separating source signals in a mixture. The output of the last GALR block in this space, $\mathbf{X}^{(B_2)}_\text{SS}$, is passed to the Decoder that is the same as in DPRNN \cite{luo2019dual} \footnote{https://github.com/ShiZiqiang/dual-path-RNNs-DPRNNs-based-speech-separation}.

\subsubsection{Speaker-knowledge Space}
\label{sec:spkspace}
The goal here is to learn speaker representations that are easily separable and discriminative for identifying different speakers, including new unseen ones.
We pass the output of the last GALR block $\mathbf{X}^{(B_1)}_\text{SK} \in \mathbb{R}^{D\times {S_j} \times K}$ to the Embedder, where it is projected along the feature dimension from $D$ dimensions to $D'= C\times D$ dimensions. The resultant $(C\times D)\times {S_j} \times K$ dimensions are then averaged over the $K$ intra-segment positions and lead to $C\times (D\times {S_j})$ dimensional outputs, which are considered as $C$ speaker features $\mathbf{Y}^{(B_1)}_{\text{SK}, j}\in \mathbb{R}^{D\times S_j}, j=1,...,C$, where $S_j$ denotes the number of segments in $\mathbf{Y}^{(B_1)}_{\text{SK},j}$. 

\subsubsection{Cross Attention}
\label{sec:crossAtt}
As shown in Fig.\ref{fig:archi} (c), we design a cross- and dual- attention mechanism, where the two spaces interact only via queries. The bottom-up queries from the speech-stimuli space retrieve the most relevant information from the speaker-knowledge space and filter out non-salient, noisy, or redundant information, so that our system is spared from the long and complex pre-processing steps (VAD, segmentation, overlap cutting, etc.) to "purify" the speaker embedding. 
Inspired by the cross-stitch \cite{ishan2016cross} and the cross-attention methods \cite{, Hudson2018ComposAttn} used between text and image (knowledge base), we are the first to use a cross-attention strategy between speaker embedding and speech signal representations. $\text{Query}(\cdot)$, $\text{Key}(\cdot)$, and $\text{Value}(\cdot)$ denote linear transformation functions, where the corresponding input vectors are linearly projected along the feature dimension ($\mathbb{R}^{D}\mapsto \mathbb{R}^{D}$) into query, key, and value vectors, respectively. Our detailed implementation of these functions is the same as \cite{matt2018self} with code.\footnote{http://msperber.com/research/self-att}

The bottom-up query $\overbar{\textbf{X}}^{(B)}\in\mathbb{R}^{D\times S}$ is generated by averaging the generic-feature space's output $\textbf{X}^{(B)} \in \mathbb{R}^{D\times S \times K}$ over the $K$ intra-segment positions. Next, we cast the query vector onto the speaker-knowledge space via the cross-attention approach. Specifically, this is achieved by computing the inner product between $\text{Query}(\overbar{\textbf{X}}^{(B)})\in\mathbb{R}^{D\times S}$ and $\text{Key}(\mathbf{Y}^{(B_1)}_{\text{SK},j})\in\mathbb{R}^{D\times S_j}$:
\begin{equation}
\label{eq:a1}
   \mathbf{a}_{\textbf{crs},j}=\text{softmax}(\text{Query}(\overbar{\textbf{X}}^{(B)})^{\top}\cdot\text{Key}(\mathbf{Y}^{(B_1)}_{\text{SK},j})).
\end{equation}
yielding a cross-attention distribution $\mathbf{a}_{\textbf{crs},j}\in\mathbb{R}^{S\times S_j}$.

Finally, we compute the sum of $\text{Value}(\mathbf{Y}^{(B_1)}_{\text{SK},j})\in\mathbb{R}^{D\times S_j}$ weighted by the cross-attention distribution $\mathbf{a}_{\textbf{crs},j}$, and then average over $S$ segments, to produce for speaker $j$ a vector $\mathbf{Z}^{(B_1)}_{\text{SK},j}\in \mathbb{R}^{D}$ , which we call a steering vector in contrast to the following ``autopilot" mode that does not rely this vector:
\begin{equation}
\label{eq:a2}
    \mathbf{Z}^{(B_1)}_{\text{SK},j}=\frac{1}{S}\sum_{S}\sum_{S_j}\mathbf{a}_{\textbf{crs},j}\cdot\text{Value}(\mathbf{Y}^{(B_1)}_{\text{SK},j})^\top.
\end{equation}

For notation simplicity, we denote $\mathbf{Z}^{(B_1)}_{\text{SK},j}$ as $\mathbf{Z}_{j}$ from now on. It is expected to capture the speaker's characteristics that are most salient and relevant to the currently observed content. Note that the generalization power of the steering vector $\mathbf{Z}_{j}$ is vital for maintaining the separation performance over unseen speakers. Therefore, based on Eq. (\ref{eq:a2}), we investigated two techniques to increase the robustness of the steering vectors: (1) apply embedding dropout $\text{Dropout}(0.1)$, and (2) add element-wise Gaussian embedding noise $\mathcal{N}(0, 0.1)$, to the steering vector $\mathbf{Z}_{j}$ only at training time. We study the ablations of each part of these techniques in Sec. \ref{sec:ablation}.

\subsubsection{Dual Attention}
\label{sec:dual}
A top-down steering is raised from the speaker-knowledge space by projecting the steering vector onto the speech-stimuli space via the dual-attention approach. It mimics the neural process that uses specific object representations in auditory memory to enhance the perceptual precision during top-down attention \cite{obleser2015selective}. This is simulated here by firstly passing $\mathbf{Z}_{j}$ through two linear mappings $\text{r}(\cdot)$ and $\text{h}(\cdot)$ to modulate the original self-attention input $\textbf{G}^{(b)}_{\text{SS},j}$ and to generate keys and values that capture the top-down speaker information. Then we compute the dual-attention distribution $\mathbf{a}_{\textbf{dual},j}$ as below:
\begin{equation}
\label{eq:a3}
\begin{aligned} 
    \mathbf{a}_{\textbf{dual},j}=\text{softmax}(&\text{Query}(\textbf{G}^{(b)}_{\text{SS},j})^{\top}\cdot \\
    &\text{Key}(\text{r}(\mathbf{Z}_{j})\odot\textbf{G}^{(b)}_{\text{SS},j}+\text{h}(\mathbf{Z}_{j})))
\end{aligned}
\end{equation}
where $\textbf{G}^{(b)}_{\text{SS},j}$ is the input of each globally attentive layer for $b=0,...,B_2$ in the speech-stimuli space. Note that $\textbf{G}^{(0)}_{\text{SS},j}$ is equal for all $j=1,...,C$, as they share the value computed from the last output $\textbf{X}^{(B)}$ from the generic-feature space.

Finally, we sum the values weighted by $\mathbf{a}_{\textbf{dual},j}$ to produce the modified self-attention output:
\begin{equation}
\label{eq:a4}
    \hat{\textbf{X}}^{(b)}_{\text{SS},j}=\sum_{S}\mathbf{a}_{\textbf{dual},j}\cdot\text{Value}(\text{r}(\mathbf{Z}_{j})\odot\textbf{G}^{(b)}_{\text{SS},j}+\text{h}(\mathbf{Z}_{j}))^\top
\end{equation}

Notably, Eq. (\ref{eq:a3}-\ref{eq:a4}) together can be seen as a replacement of the original self-attention operation in each globally attentive layer in the speech-stimuli space. In Appendix \textbf{5.4}, we empirically validate and analyze the effectiveness of dual attention by comparing it to another successful approach --- feature-wise linear modulation (FiLM) \cite{Ethan2017film}.

\subsection{Losses}
\label{sec:losses}
In the speech-stimuli space, we use the scale-invariant signal-to-noise ratio (SI-SNR) \cite{le2019sdr} loss, $\mathcal{L}_{\tiny\text{SI-SNR}}$, to reconstruct the source signals using an utterance-level permutation invariant training (u-PIT) method \cite{yu2017permutation}. However, losses such as SI-SNR expend heavy computation at reconstructing every detail of the signals, while often ignoring the global context. Instead, in the speaker-knowledge space, we want to learn robust representations that entail the underlying shared-speaker context among one speaker's different utterances of speech signals under different interfering signals. Therefore, we design more appropriate losses $(\mathcal{L}_{\tiny\text{InCE}} + \mathcal{L}_{\tiny\text{reg}})$ below for extracting the underlying shared information between the extended sequence context. All in all, the joint loss is $\mathcal{L}_{\tiny\text{SI-SNR}}+\lambda(\mathcal{L}_{\tiny\text{InCE}}+\mathcal{L}_{\tiny\text{reg}})$, where $\lambda$ is a weighting factor. Appendix \textbf{5.7} elaborates the speaker-embedding-based permutation computation for training speedup. 
\subsubsection{Self-supervised Loss}
\label{sec:ince}
$\mathcal{L}_{\tiny\text{InCE}}$ is our proposed self-supervised loss called Tune-InCE loss defined as follows:
\begin{align}
\label{eq:Tune-InCE}
    &\mathcal{L}_{\tiny\text{InCE}} = -
    \mathbb{E}_{\mathcal{D}}\left[\log\left({f(\mathbf{Z}_{j}, \textbf{E}_{i_j})}/{\sum_{i=1}^{N}f(\mathbf{Z}_j, \textbf{E}_i)}\right)\right],\\
    &f(\mathbf{Z}_{j}, \textbf{E}_{i})= \exp\left(-\alpha\lVert\mathbf{Z}_{j}- \textbf{E}_{i}\rVert_{2}^{2}\right), \label{eq:f}
\end{align}
where $\textbf{E}_i$ is the speaker embedding vector defined for all $i=1,...,N$ speakers in the training set, $i_j$ indicates the corresponding speaker index for speaker $j=1,...,C$ in the current mixture, $\alpha>0$ is a learnable scaler, and $\mathbb{E}_{\mathcal{D}}\left[\cdot\right]$ denotes the expectation over the training set $\mathcal{D}$ containing all input mixture utterances and corresponding speaker IDs. 

In Appendix \textbf{5.1}, we prove that our proposed form of Eq. (\ref{eq:f}) about $f(\mathbf{Z}_{j}, \textbf{E}_{i})$  corresponds to treating each speaker embedding $\textbf{E}_{i_j}$ as a cluster centroid of different steering vectors $\mathbf{Z}_{j}$ of utterances generated by the same speaker $i_j$ while $\alpha > 0$ controls the cluster size. $f(\mathbf{Z}_{j}, \textbf{E}_{i})$ can also be regarded as a regularized variant of the contrastive predictive coding (CPC) loss \cite{nce18, Yoshua2019self1, Yoshua2019self2, Yoshua2019self3, Yoshua2019self4}. We also prove that by minimizing $\mathcal{L}_{\tiny\text{InCE}}$ we maximize a lower bound of the mutual information between the steering vector $\mathbf{Z}_{j}$ and the speaker vector $\textbf{E}_{i_j}$, which encourages learning separable inter-speaker utterance embeddings and compact intra-speaker utterance embeddings. During training, whenever a steering vector of source $j$ by the speaker $i_j$ is generated, the corresponding speaker embedding vector is updated using exponential moving average (EMA):
$\Delta\textbf{E}_{i_j}=\varepsilon(\textbf{Z}_{j}-\textbf{E}_{i_j})$, where $\varepsilon\in[0,1]$ is a small hyper-parameter scalar that controls the update step size.


$\mathcal{L}_\text{reg}$ is a regularization loss that avoids collapsing to a trivial solution of all zeros:
\begin{equation}
\label{eq:loss-reg}
\mathcal{L}_\text{reg}=-\frac{1}{\gamma C}\sum_{j=1}^C\min_{i\in\{1,...,N\}\backslash \{i_j\}}
\text{log}\lVert\mathbf{E}_{i_j}-\mathbf{E}_i\rVert_1,
\end{equation}
where $\gamma > 0$ is a weighting factor. See Table \ref{tab:ablation} for $\mathcal{L}_\text{reg}$'s ablation study.

\subsection{Training and Inference Modes}
\label{sec:train_infer}
We train and test the proposed Tune-In system sequentially in different modes.

1) ``Autopilot" mode: performing a standalone speech separation without any speaker knowledge, when the training graph only traverses the generic-feature space and the speech-stimuli space, i.e., the forward and backward path goes only along the blue lines as shown in Fig. \ref{fig:archi} (c), and only the SI-SNR loss is used. Since in ``Autopilot" mode there is no top-down steering information from the speaker-knowledge space, the outputs of $r(\cdot)$ and $h(\cdot)$ in Eq. (\ref{eq:a3}-\ref{eq:a4}) are an all-one vector and an all-zero vector, respectively.

2) ``Online" mode: performing concurrent speaker-knowledge extraction and speech separation, based on the same online input signals, i.e.,  ${S_j}= S$. Now the training forward path goes along both the blue and the brick-red lines in Fig. \ref{fig:archi} (c). In this mode, the computation in the generic-feature space is shared by both down-streaming spaces.

3) ``Offline" mode: performing offline speaker-knowledge extraction that utilizes a given enrollment (i.e., ${S_j}\neq{S}$) with the target speaker known a priori. The enrollment was not restricted to be a mixture with another random interfering speaker (with no prior knowledge) or a clean utterance. Since the speaker embedding could been pre-computed offline, this part of the inference time is saved. In Fig. \ref{fig:archi} (c), we depict this property with different segment lengths and dotted lines connecting the generic-feature space and the speaker-knowledge space. Unlike traditional SV systems, we can extract reliable speaker embedding from not only clean utterances but also challenging overlapped mixtures. This is significant merit particularly important to realistic industrial deployment.

Until now, it is free to collect speaker enrollments from unconstrained various environments. Our proposed bottom-up query approach makes the model only attend to the most relevant parts of variable-length sequences of either online signals (analog to recent short-term memory) or pre-collected enrollments (analog to long-term persistent memory and experience). The top-down query then integrates the retrieved information to iteratively compute the speech representation through the stack of GALR blocks in the speech-stimuli space. All the above attentive operations are directly inferred from the input data without resorting to supervision.

In both ``online" and ``offline" modes, the model components in the speaker-knowledge space and the speech-stimuli space are trained cooperatively by casting information towards each other via the cross- and dual-attention mechanism. Our joint training method is very different from existing approaches that generally fuse the speaker embeddings and speech features into the same vector space through linear combinations, multiplication, or concatenation \cite{jixuan2020spk, Ethan2017film, zeg2020wavesplit}. Specifically, we facilitate the communication of the two spaces only through soft-attention mechanisms. Consequently, the interaction between the two spaces is mediated through probability distributions only. The casting of speech features onto speaker embedding restricts the space of the valid top-down steering operations by anchoring the steering vectors $\mathbf{Z}_{j}$ back in the original speaker-knowledge space. This serves as a form of regularization that restricts $\mathbf{Z}_{j}$ to be a compact vector in the unique speaker-knowledge space, which enables the self-supervised techniques as discussed in Sec. \ref{sec:losses}. 

Moreover, our network achieves better transparency and interpretability than the existing approaches. Unlike the black-box network in prior works, we can easily interpret the cross attention and the consequent top-down steering process. In Appendix \textbf{5.5}, we illustrate an intriguing phenomenon that selective bottom-up cross attention similar to humans' behavior could be automatically learnt by our network.
\subsubsection{External Speaker Augmentation}
\label{sec:spk-aug}
We find that data augmentation with external speaker datasets can enhance the generalizability of learnt speaker embedding over unseen speakers during a test. It is particularly useful when the number of speakers is small in the original training dataset. For example, we used Librispeech for the speaker augmentation for WSJ0-2mix in Table \ref{tab:ablation}.
Our framework is flexible to utilize external speaker datasets to continue training the blocks in the speaker-knowledge space, where the forward and backward path goes only along the brick-red lines as shown in Fig. \ref{fig:archi}, and only the speaker loss $\mathcal{L}_{\tiny\text{InCE}} + \mathcal{L}_{\tiny\text{reg}}$ is used.
After that, the blocks in the speech-stimuli space are fine-tuned using the original training set with the extracted $\mathbf{Z}_{j}$ by the more generalized speaker-knowledge component. Table \ref{tab:ablation} provides an ablation study regarding this technique.

\section{Evaluation and Analysis}
\label{sec:3}
\subsection{Datasets and Model Setup}
\label{sec:dataset_model}
We describe the datasets and model setup in brief as below. Additional information needed to reproduce our experiments is stated in detail in Appendix \textbf{5.2}. 
\subsubsection{Datasets}
\label{sec:dataset}
We used a benchmark 8kHz dataset WSJ0-2mix \cite{hershey2016deep} for comparison with state-of-the-art source separation systems. Moreover, to evaluate performance on separating speech from music interference, we created another dataset WSJ0-music by mixing the speech corpus of WSJ0-2mix with music clips. We also used a large-scale publicly available 16kHz benchmark dataset Librispeech \cite{panayotov2015librispeech}. 

During training, each utterance is online mixed (masked) with an interfering utterance from another randomly chosen speaker. For testing, we pre-generated a fixed set of interfered utterances using the same SNR conditions as in training.

\begin{table*}[ht!]
\centering
\begin{tabular}{c|c|c|cc|cc}
\bf Dataset & \bf Model & \bf  Parameters & \bf  Memory & \bf  GFLOPs & \bf SI-SNRi &\bf SDRi  \\
\hline
\multirow{7}{*}{WSJ0-2mix}
& {TDAA \cite{bo2018spkspIJCAI}} & 14.8M & - & -& - & 12.6 \\
& {Conv-TasNet} & 8.8M &-&-& 15.5 & 15.9 \\
& {$^\ddagger$DPRNN} & 2.6M &1,970MB&84.6& 18.8 & 19.1 \\
\cline{2-7}
& Tune-In Autopilot & \bf 2.3M & \bf 730MB & \bf 28.4 & \bf 20.3 & \bf 20.5 \\
& Tune-In Offline & \bf 2.4M & \bf 799MB & \bf 28.9 & \bf 20.4 & \bf 20.6\\
& Tune-In Online & 3.2M & \bf 1,309MB & \bf 37.3& \bf 20.8$^\dag$ & \bf 21.0$^\dag$\\
\hline\hline
\multirow{3}{*}{WSJ0-music}
& {$^\ddagger$DPRNN} & 2.6M & 231MB& 10.7 & 14.5 & 14.8 \\
\cline{2-7}
& Tune-In Autopilot & \bf 2.3M &\bf 186MB& \bf 8.3 & \bf15.9 & \bf 16.2\\
& Tune-In Online & 3.2M &332MB& 10.7 &  \bf 15.9 & \bf 16.2\\
\hline\hline
\multirow{4}{*}{LibriSpeech}
& {$^\ddagger$DPRNN} & 2.6M &1,152MB&40.5& 12.0 & 12.5 \\
\cline{2-7}
& Tune-In Autopilot & \bf  2.3M &\bf 1,013MB&\bf 31.1&  \bf 12.2 & \bf 12.7 \\
& Tune-In Offline & \bf 2.4M & \bf 1,112MB& \bf 31.7& \bf 12.6 & \bf 13.1 \\
& Tune-In Online & 3.2M & 1,912MB &\bf 39.9& \bf 13.0 & \bf 13.6 \\
\end{tabular}
\caption{
Performance comparison with conventional separation systems. Models with $\ddagger$ were reproduced in our own implementation for performance and runtime costs evaluation. Memory and GFLOPs were calculated on a 1s audio input.
}
\label{tab:sisnr}
\end{table*}

\subsubsection{Model Setup}
\label{sec:modelsetup}
The encoder and decoder structure, as well as the model's hyper-parameter settings, were directly inherited from DPRNN's setup \cite{luo2019dual} for comparison purposes. Note that no model hyper-parameter has been fine-tuned towards our proposed structure, otherwise more improvement could be reasonably expected for ours. Tune-In loss hyper-parameters were set empirically.

\subsection{Results and Discussion}
\label{sec:results}
\subsubsection{Speaker Verification Performance}
This experiment investigated a downstream SV task to verify the discriminative and generalization property of our learnt representations. Note that we did not aim to achieve a SOTA SV score, but to examine the effectiveness of representation learning. An advanced SincNet-based SV model \cite{Yoshua2018sv} was considered a qualified reference system, for its reproducible high performance and speaker-embedding based network.\footnote{https://github.com/mravanelli/SincNet.}
One of the reference models, namely ``$\text{SincNet-clean}$", was trained on the clean Librispeech training set like conventional SV settings; the other reference model, namely ``$\text{SincNet-masked}$", was trained on the online-masked (mixed) Librispeech training set same as our proposed system. 
\label{sec:svperform}
\begin{figure}[t!]
 \centering
    \includegraphics[width=0.46\textwidth]{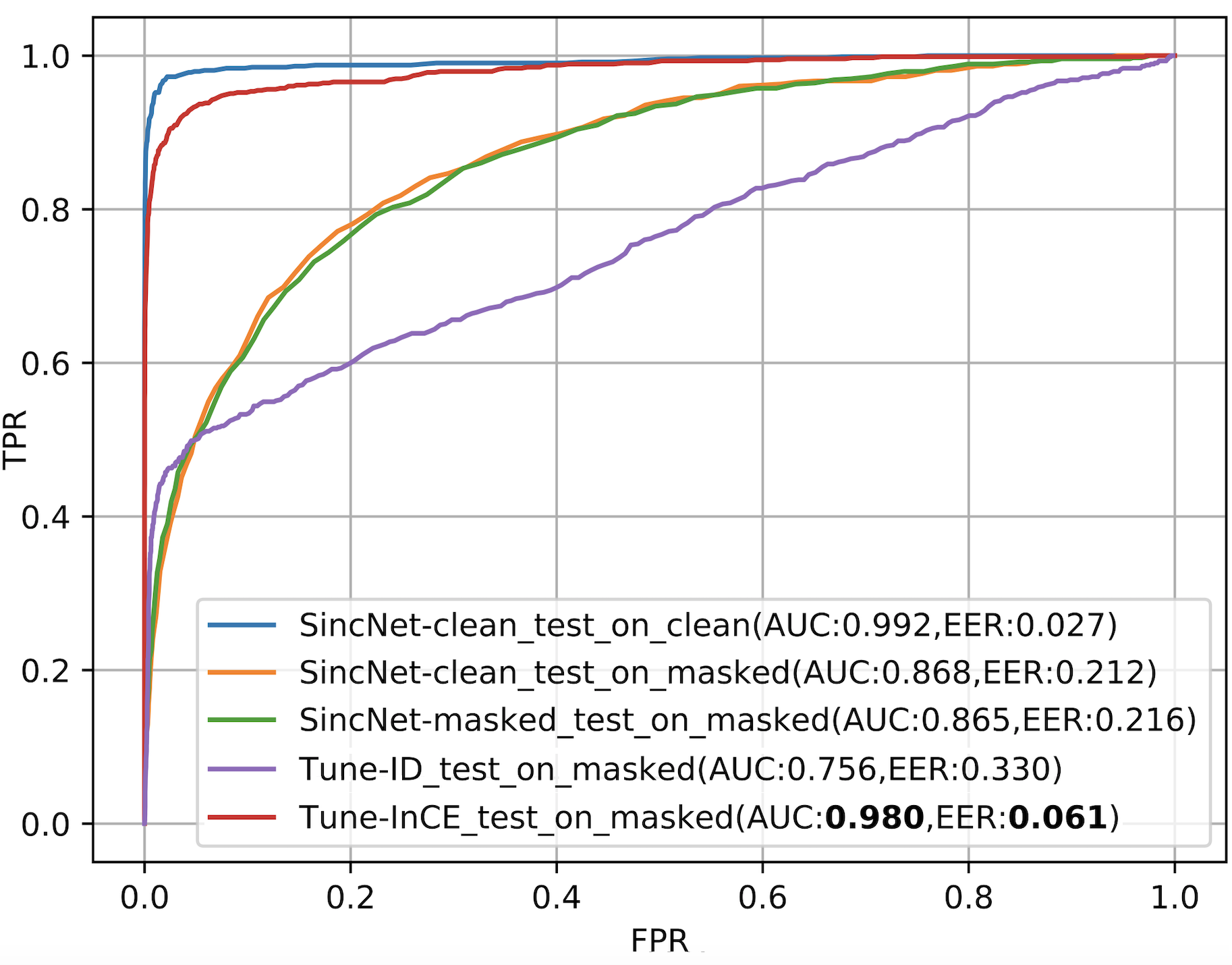}
\caption{ROC curves of speaker verification by different models.}
\label{fig:roc}
\end{figure}
Fig. \ref{fig:roc} plots the true positive rate (TPR) against the false positive rate (FPR), and gives the Receiving Operating Characteristic (ROC) curves by different models. The blue line indicates the standard performance by the SincNet SV system  on clean test data (``\text{SincNet-clean\_test\_on\_clean}") and can serve as an upper-bar reference. The yellow and green lines indicate that SincNet, either trained with clean (``\text{SincNet-clean\_test\_on\_masked}") or masked data (``\text{SincNet-masked\_test\_on\_masked}"), can not handle masked test data well. The red line denoting the ``\text{Tune-InCE\_test\_on\_masked}" was by our proposed self-supervised model on the corrupted test data in the adverse conditions with $0$-$5$dB SIR. Surprisingly, it achieves an AUC as high as $0.980$ and an EER of $0.061$, and its ROC line is approaching the ``\text{SincNet-clean\_test\_on\_clean}", i.e., the performance on clean data by the reference SV system.

We also trained our model using a fully supervised approach of learnable token embedding, which has been widely applied in both NLP \cite{Jacob2019Bert} and speech domain \cite{bo2018spkspIJCAI, zeg2020wavesplit}. The learnable embedding \cite{bo2018spkspIJCAI} maintained the target embedding of the inventory of the $N$ speaker IDs. We call the fully supervised model ``Tune-ID", where the token embeddings for speaker IDs replaced our EMA based speaker embedding $\textbf{E}_{i_j}$ in ``Tune-InCE", and the conventional token embedding loss \cite{bo2018spkspIJCAI}, replaced our proposed self-supervised Tune-InCE loss. All other training conditions for ``Tune-ID", such as Gaussian noise, dropout, online mixing strategy, etc., are the same as ``Tune-InCE". We can observe that the fully supervised model ``\text{Tune-ID\_test\_on\_masked}" performs poorly in generalization. To analyze the above performances, in Appendix \textbf{5.6}, we study the generalization ability of the deep representations learnt by our proposed self-supervised approach and compare it to the fully supervised approach. The experimental result suggests that SincNets, even the one trained with the masked, are fragile when the speech inputs are not clean enough. In contrast, the self-supervised learning prevented our model from learning a trivial task of only predicting speaker identity, but enforced the model to learn deep representations with essential discriminative power.
\subsubsection{Speech Separation Performance}
We then evaluated our proposed self-supervised Tune-In model on a speech separation task. As shown in Table \ref{tab:sisnr}, all systems were assessed in terms of SI-SNRi and SDRi. The ``Tune-In Online" model inferred the speaker embeddings and steering vectors online and performed speech separation simultaneously; thus the additional task brought the extra complexity and around $0.8$M model parameter size more than that of ``Tune-In Autopilot" and ``Tune-In Offline". The results with $^\dag$ indicate speaker augmentation was applied during training (see Table \ref{tab:ablation} for ablation study). As discussed in Sec.\ref{sec:train_infer}, ``Tune-In Offline" uses enrollments from target speakers. These enrollments were $16$s long and collected from challenging conditions as online, i.e., the same SIR range but with different interferences.

Our Tune-In model outperforms the DPRNN model by a very large margin in terms of SI-SNRi and SDRi, which is achieved at a lower cost of up to $62.9\%$ less run-time memory and $66.4\%$ fewer computational operations. Even compared to the most recent SOTA model (DPGRNNs) \cite{Eliya2020Voice} with parameters as large as $7.5$M, our Tune-In model still can maintain a significant advantage of $0.7$ in terms of SI-SNRi and SDRi. For WSJ0-music, we used a window length of 16 samples instead of 4 samples to speed up the training for both Tune-In and DPRNN systems. By comparing results on WSJ0-music by Tune-In Autopilot v.s. Online, we noticed that the speaker embedding was not as beneficial for the speech-music separation task. This is reasonable because speech-music separation tends to rely on other key factors, e.g., rhythm, spectral continuity, pitch, etc., rather than speaker embedding. Nevertheless, results on WSJ0-music demonstrate the consistent advantage of our proposed GALR structure over DPRNN by a large margin.
\subsubsection{Ablation Study}
\label{sec:ablation}
In Table \ref{tab:ablation}, we presented the test SI-SNRi for variants of our best-performing Tune-In Online system. Each variant removes a technique that is proposed for presumably better speaker embedding learning. The first two ablations experimented regarding the two different regularization techniques proposed after Eq. (\ref{eq:a2}). When the dropout replaces the element-wise Gaussian noise, we observed a notable degradation. This empirically suggests that adding Gaussian noise rather than dropout is effective for building a more robust speaker-knowledge-steered separation system. The removal of speaker augmentation also degrades the performance. Meanwhile, we found that the removal of regularization loss would bring down SI-SNRi. The ablation results together suggest that the combination of the proposed techniques is pivotal in achieving state-of-the-art performance.
\begin{table}[b!]
\centering
\begin{tabular}{l|c}
\bf Model & \bf SI-SNRi \\
\hline
Tune-In Online & 20.8 \\
\hline
\,\,w/ Embedding Dropout & 19.8\\
\,\,w/o Embedding Noise & 20.0\\
\,\,w/o Speaker Augmentation & 20.6\\
\,\,w/o Regularization Loss & 20.2\\
\end{tabular}
\caption{
Model ablations for our Tune-In Online model reporting results over the test set on WSJ0-2mix.
}
\label{tab:ablation}
\end{table}
\section{Conclusions}
\label{sec:4}
``Tune-In" attempts to practice the cocktail party effect in a new attentive way. We find it is critical to learn and maintain a speaker-knowledge and a speech-stimuli space separately. A novel cross- and dual-attention mechanism is proposed for information exchange between the two spaces, mimicking the bottom-up and top-down processes of a human neural system. The GALR structure as the building block in Tune-In breaks the memory and computation bottleneck of conventional self-attention layers. We reveal that substantially discriminative and generalizable representations can be learnt in severely interfered conditions via our self-supervised training. Tune-In outperforms a standard SV system and meanwhile achieves SOTA SS performances consistently in all test modes, particularly at a lower cost. Future work includes 1) a multi-channel extension as human's cocktail party effect works best as a binaural effect, 2) the effect revealed by prior neurobiology study 
about an acceleration in attention switching and context updating, and 3) Tune-In for speech recognition by replacing the speech-stimuli space with a speech-phoneme space.



\bibliography{mybib}

\section{Appendix}
\subsection{Theoretical Studies for The Tune-InCE Loss}
In the main paper, we introduce the Tune-InCE loss, which takes the following form:
\begin{align}
\label{eq:1.1}
    &\mathcal{L}_{\tiny\text{InCE}} = -
    \mathbb{E}_{\mathcal{D}}\left[\log\left({f(\mathbf{Z}_{j}, \textbf{E}_{i_j})}/{\sum_{i=1}^{N}f(\mathbf{Z}_j, \textbf{E}_i)}\right)\right],\\
\label{eq:1.2}
    &f(\mathbf{Z}_{j}, \textbf{E}_{i})= \exp\left(-\alpha\lVert\mathbf{Z}_{j}- \textbf{E}_{i}\rVert_{2}^{2}\right),
\end{align}
For the ease of understanding, we consider the steering vector $\mathbf{Z}_{j}$ as an embedding for the $j$-th separated signal (referred to as separation embedding), for $j=1,...,C$, given that each audio input is a mixture of $C$ sources. $\textbf{E}_{i}$ is the speaker embedding for speaker $i$ out of $N$ speakers. We only concern a multi-talker scenario, where the source $j$ can only be generated by one of the speakers indexed by $i_j$. We also assume that the training set $\mathcal{D}$ sufficiently defined the sample space of their joint distributions.
\par
First, we would like to study the relationship between Tune-InCE loss and mutual information:

\begin{definition}
\label{def:1}
Mutual information of the speaker embeddings and the separation embeddings is defined as
\begin{align}
\label{eq:2}
\mathcal{I}(\textbf{E}_{i_j};\mathbf{Z}_{j})=\mathbb{E}_{\mathcal{D}}\left[\log \frac{p(\textbf{E}_{i_j}, \mathbf{Z}_{j})}{p(\textbf{E}_{i_j}) p(\mathbf{Z}_{j})}\right].
\end{align}
\end{definition}

\par
Then, we make the following assumptions:

\begin{assumption}
\label{ass:1}
With a suitable mathematical form for function $f(\cdot)$, we can model a density ratio defined as
\begin{align}
\label{eq:3}
f(\mathbf{Z}_{j}, \textbf{E}_{i_j}) \propto \frac{p(\textbf{E}_{i_j}|\mathbf{Z}_{j})}{p(\textbf{E}_{i_j})}=\frac{p(\mathbf{Z}_{j}|\textbf{E}_{i_j})}{p(\mathbf{Z}_{j})}.
\end{align}
\end{assumption}

\begin{assumption}
\label{ass:2}
Considering the case of $i\neq i_j$, since the separation embedding $\mathbf{Z}_{j}$ does not belong to speaker $i$, $\textbf{E}_{i}$ should not be dependent on $\mathbf{Z}_{j}$. Therefore, it is sensible to assume
\begin{align}
\label{eq:4}
p(\textbf{E}_{i}|\mathbf{Z}_{j})=p(\textbf{E}_{i}),&\,\,\,\,\,\,\,\,\forall\,\,i\neq i_j.
\end{align}
\end{assumption}

\begin{table*}[ht!]
\centering
\caption{
Performance comparison on WSJ0-2mix by DPRNN versus GALR, each with different window length settings.
}
\label{tab:winlen}
\begin{tabular}{c|c|c|cc|cc}
\specialrule{.16em}{0em}{0em} 
\textbf{Architecture}  & \textbf{Parameters} & \textbf{\makecell{Window\\Length}} & \textbf{SI-SNRi}  & \textbf{SDRi} & \textbf{\makecell{Memory}} & \textbf{GFLOPs} \\
\hline
\multirow{4}{*}{DPRNN} & \multirow{4}{*}{2.6M} & 16 & 15.9 & 16.2 & 231 MiB & 10.7\\ 
& & 8  & 17.0 & 17.3 & 456 MiB & 22.2 \\ 
& & 4  & 17.9 & 18.1 &929 MiB & 42.3\\ 
& & 2  & 18.8 & 19.1 &1,970 MiB & 84.6\\ 
\hline
\multirow{4}{*}{GALR} 
  & \multirow{4}{*}{\textbf{2.3M}}& 16 & \textbf{17.0}  &  \textbf{17.3} & \textbf{186 MiB} &\textbf{8.3}\\
 & & 8 & \textbf{18.7} &  \textbf{18.9}&  \bf{363 MiB} &\bf{14.2} \\
& & 4 & \textbf{20.3} &  \textbf{20.5} &\bf{730 MiB} &\bf{28.4}\\
& & 2 & \textbf{19.7} &  \textbf{19.9} &\bf{1,490 MiB} &\bf{55.5}\\
\specialrule{.13em}{0em}{0em} 
\end{tabular}
\end{table*}

After making these assumptions, we can deduce the following claims:
\begin{claim}
\label{claim:1}
Minimizing the Tune-InCE loss results in maximizing mutual information between the speaker embeddings and the separation embeddings, since the Tune-InCE loss $\mathcal{L}_{\tiny\text{InCE}}$ serves as an upper bound of the negative mutual information $-\mathcal{I}(\textbf{E}_{i_j};\mathbf{Z}_{j})$.
\end{claim}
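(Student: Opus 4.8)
The plan is to follow the standard InfoNCE argument (as in the CPC reference the text cites), but adapted to the Tune-In notation and driven entirely by Definition \ref{def:1} together with Assumptions \ref{ass:1} and \ref{ass:2}. First I would move the minus sign inside the logarithm to rewrite $\mathcal{L}_{\text{InCE}} = \mathbb{E}_{\mathcal{D}}\left[\log\left(\sum_{i=1}^N f(\mathbf{Z}_j, \textbf{E}_i)\big/ f(\mathbf{Z}_j, \textbf{E}_{i_j})\right)\right]$, and then split the positive term $i=i_j$ out of the denominator sum so that the argument of the log reads $1 + \sum_{i\neq i_j} f(\mathbf{Z}_j, \textbf{E}_i)\big/ f(\mathbf{Z}_j, \textbf{E}_{i_j})$.

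Next I would substitute Assumption \ref{ass:1}. Since $f(\mathbf{Z}_j, \textbf{E}_i)\propto p(\textbf{E}_i\mid\mathbf{Z}_j)/p(\textbf{E}_i)$ with the \emph{same} proportionality constant for every $i$, the constant cancels in each ratio, leaving $f(\mathbf{Z}_j, \textbf{E}_i)/f(\mathbf{Z}_j, \textbf{E}_{i_j}) = \left[p(\textbf{E}_i\mid\mathbf{Z}_j)/p(\textbf{E}_i)\right]\cdot\left[p(\textbf{E}_{i_j})/p(\textbf{E}_{i_j}\mid\mathbf{Z}_j)\right]$. Applying Assumption \ref{ass:2} pointwise then forces $p(\textbf{E}_i\mid\mathbf{Z}_j)/p(\textbf{E}_i)=1$ for every negative index $i\neq i_j$, so the inner sum collapses exactly to $N-1$, giving $\mathcal{L}_{\text{InCE}} = \mathbb{E}_{\mathcal{D}}\left[\log\left(1 + (N-1)\,p(\textbf{E}_{i_j})/p(\textbf{E}_{i_j}\mid\mathbf{Z}_j)\right)\right]$.

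Finally I would invoke the elementary bound $\log(1+a)\ge\log a$, valid for $a>0$, to drop the additive $1$, obtaining $\mathcal{L}_{\text{InCE}} \ge \log(N-1) + \mathbb{E}_{\mathcal{D}}\left[\log\left(p(\textbf{E}_{i_j})/p(\textbf{E}_{i_j}\mid\mathbf{Z}_j)\right)\right]$. Recognizing the expectation as $-\mathcal{I}(\textbf{E}_{i_j};\mathbf{Z}_j)$ through Definition \ref{def:1} (after rewriting the joint density as $p(\textbf{E}_{i_j}\mid\mathbf{Z}_j)p(\mathbf{Z}_j)$) yields $\mathcal{L}_{\text{InCE}} \ge \log(N-1) - \mathcal{I}(\textbf{E}_{i_j};\mathbf{Z}_j) \ge -\mathcal{I}(\textbf{E}_{i_j};\mathbf{Z}_j)$, where the last inequality uses $\log(N-1)\ge 0$ for $N\ge 2$. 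This is exactly the upper-bound claim, and since the gap $\log(N-1)$ grows with $N$, minimizing the loss pushes up the corresponding lower bound on mutual information.

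The main obstacle is not any heavy computation but keeping the bookkeeping honest: I must verify that the shared proportionality constant in Assumption \ref{ass:1} genuinely cancels in the normalized ratio, and that Assumption \ref{ass:2} is used \emph{pointwise} (so the negatives collapse to exactly $N-1$ rather than only in expectation, which is where the original CPC derivation is merely approximate). The single genuine inequality in the chain is the $\log(1+a)\ge\log a$ step, so I would make sure it is applied only where positivity of the density ratio guarantees $a>0$, which holds because $f$ in Eq. (\ref{eq:f}) is a strictly positive exponential.
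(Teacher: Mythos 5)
Your proof is correct and, up to the final inequality, identical in structure to the paper's: both substitute Assumption \ref{ass:1} into the loss (with the proportionality constant cancelling inside the normalized ratio), split out the positive term, and invoke Assumption \ref{ass:2} pointwise so that the negative terms collapse to exactly $N-1$ --- your observation that this replaces the approximation step of the original CPC derivation with an exact identity is precisely the role the paper's Assumption \ref{ass:2} plays. Where you genuinely diverge is the single inequality in the chain. Writing $b = p(\textbf{E}_{i_j})/p(\textbf{E}_{i_j}|\mathbf{Z}_j)$, the paper bounds $\log\left(1+(N-1)b\right) \ge \log\left(Nb\right)$, which holds iff $b\le 1$, i.e.\ it silently assumes $p(\textbf{E}_{i_j}|\mathbf{Z}_j)\ge p(\textbf{E}_{i_j})$ pointwise --- a positive-dependence condition inherited from the standard InfoNCE argument and never stated --- and is rewarded with the constant $\log N$, giving $\mathcal{L}_{\text{InCE}} \ge \log N - \mathcal{I}(\textbf{E}_{i_j};\mathbf{Z}_{j})$. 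You instead use $\log(1+a)\ge\log a$ for $a>0$, which is unconditionally valid, and get $\mathcal{L}_{\text{InCE}} \ge \log(N-1) - \mathcal{I}(\textbf{E}_{i_j};\mathbf{Z}_{j})$, finishing with $\log(N-1)\ge 0$ for $N\ge 2$ (automatic in the multi-talker setting). Your constant is looser by $\log\frac{N}{N-1}$, a gap that vanishes as $N$ grows, but your chain needs nothing beyond the two stated assumptions; the paper's bound is tighter and matches the familiar CPC constant, but is, strictly speaking, conditional on the unstated requirement $b\le 1$. Both versions establish the claim as stated; yours is arguably the more rigorous proof of it.
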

\begin{proof}
To prove this claim, we substitute Eq. (\ref{eq:3}) into Eq. (\ref{eq:1.1}) and obtain the following results:
\begin{align*}
    \mathcal{L}_{\tiny\text{InCE}} &= -
    \mathbb{E}_{\mathcal{D}}\left[\log\left(\frac{\frac{p(\textbf{E}_{i_j}|\mathbf{Z}_{j})}{p(\textbf{E}_{i_j})}}{\sum_{i=1}^{N}\frac{p(\textbf{E}_{i}|\mathbf{Z}_{j})}{p(\textbf{E}_{i})}}\right)\right]\\
    &=\mathbb{E}_{\mathcal{D}}\left[\log\left(1+\frac{\sum_{i\neq i_j}\frac{p(\textbf{E}_{i}|\mathbf{Z}_{j})}{p(\textbf{E}_{i})}}{\frac{p(\textbf{E}_{i_j}|\mathbf{Z}_{j})}{p(\textbf{E}_{i_j})}}\right)\right]\\
    &=\mathbb{E}_{\mathcal{D}}\left[\log\left(1+(N-1)\frac{p(\textbf{E}_{i_j})}{p(\textbf{E}_{i_j}|\mathbf{Z}_{j})}\right)\right]\\
    &\geq\mathbb{E}_{\mathcal{D}}\left[\log\left(N\frac{p(\textbf{E}_{i_j})}{p(\textbf{E}_{i_j}|\mathbf{Z}_{j})}\right)\right]\\
    &=\log N-\mathbb{E}_{\mathcal{D}}\left[\log\left(\frac{p(\textbf{E}_{i_j}|\mathbf{Z}_{j})}{p(\textbf{E}_{i_j})}\right)\right]\\
    &=\log N-\mathbb{E}_{\mathcal{D}}\left[\log\left(\frac{p(\textbf{E}_{i_j}, \mathbf{Z}_{j})}{p(\textbf{E}_{i_j})p(\mathbf{Z}_{j})}\right)\right]\\
    &=\log N-\mathcal{I}(\textbf{E}_{i_j};\mathbf{Z}_{j})
\end{align*}
This proves that $\mathcal{L}_{\tiny\text{InCE}}$ is an upper bound of $ -\mathcal{I}(\textbf{E}_{i_j};\mathbf{Z}_{j})$.
\end{proof}
Next, we study our proposed form of $f(\mathbf{Z}_{j}, \textbf{E}_{i_j})$ and its associated properties when minimizing the Tune-InCE loss.
\begin{claim}
\label{claim:2}
Applying our proposed form of $f(\mathbf{Z}_{j}, \textbf{E}_{i_j})$ to $\mathcal{L}_{\tiny\text{InCE}}$ corresponds to treating each speaker embedding $\textbf{E}_{i_j}$ as a cluster centroid (Gaussian mean) of different separation embedding vectors $\mathbf{Z}_{j}$ generated by the same speaker $i_j$ with a learnable parameter $\alpha > 0$ controlling the cluster size (Gaussian variance):
\begin{align}
\label{eq:6}
    p(\mathbf{Z}_{j}|\textbf{E}_{i_j})&=\mathcal{N}(\textbf{E}_{i_j}, (2\alpha)^{-1}\mathbf{I})\\
\label{eq:7}
    p(\mathbf{Z}_{j})&=\mathcal{N}(\mathbf{0}, (2\alpha)^{-1}\mathbf{I}).
\end{align}
\end{claim}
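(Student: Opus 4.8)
The plan is to verify the stated correspondence by substituting the two posited Gaussian densities (\ref{eq:6}) and (\ref{eq:7}) into the density-ratio model of Assumption \ref{ass:1}, and confirming that the implied ratio reproduces the proposed form (\ref{eq:1.2}) of $f$ up to a factor that is inconsequential for the loss. In other words, I would work in the direction ``generative model $\Rightarrow$ $f$'': take the cluster interpretation as the hypothesis and show that it is exactly what produces the Euclidean-distance kernel already used in the Tune-InCE loss.

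First I would write out both densities explicitly. Because (\ref{eq:6}) and (\ref{eq:7}) share the same isotropic covariance $(2\alpha)^{-1}\mathbf{I}$, their normalization constants $(2\pi(2\alpha)^{-1})^{-D/2}$ are identical and cancel in the ratio, so that
\begin{equation*}
\frac{p(\mathbf{Z}_{j}\mid\textbf{E}_{i_j})}{p(\mathbf{Z}_{j})}=\exp\!\left(-\alpha\lVert\mathbf{Z}_{j}-\textbf{E}_{i_j}\rVert_2^2+\alpha\lVert\mathbf{Z}_{j}\rVert_2^2\right).
\end{equation*}
The first exponential factor is precisely $f(\mathbf{Z}_{j},\textbf{E}_{i_j})$ from (\ref{eq:1.2}), while the residual factor $\exp(\alpha\lVert\mathbf{Z}_{j}\rVert_2^2)$ depends only on $\mathbf{Z}_{j}$ and not on the speaker index $i$. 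Matching the coefficient $\alpha$ in the exponent of $f$ against the variance $(2\alpha)^{-1}$ is exactly what fixes the ``cluster size'' reading: $\textbf{E}_{i_j}$ plays the role of the Gaussian mean (centroid of the $\mathbf{Z}_{j}$ emitted by speaker $i_j$), and a larger $\alpha$ shrinks the variance and hence tightens the cluster.

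The last step closes the argument and is where the only real subtlety lies. Since the loss in (\ref{eq:1.1}) uses $f$ solely through the softmax normalized over all $i=1,\dots,N$, any multiplicative factor common to every speaker---in particular the $\mathbf{Z}_{j}$-dependent but index-free factor $\exp(\alpha\lVert\mathbf{Z}_{j}\rVert_2^2)$---cancels between numerator and denominator and leaves $\mathcal{L}_{\tiny\text{InCE}}$ unchanged. This is exactly why Assumption \ref{ass:1} is stated with $\propto$ rather than $=$: the proposed $f$ need only agree with the density ratio up to such a factor, which it does. I expect this cancellation to be the main thing to argue carefully, since $f$ is not literally equal to the density ratio; it is also worth noting that the shared covariance in (\ref{eq:6})--(\ref{eq:7}) is what makes the normalization constants drop out cleanly, and complementarily that the normalized softmax then coincides with the posterior responsibility of $\mathbf{Z}_{j}$ under an equal-weight mixture of the Gaussians $\mathcal{N}(\textbf{E}_i,(2\alpha)^{-1}\mathbf{I})$, which is the clustering picture the claim describes.
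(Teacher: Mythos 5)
Your proposal is correct and follows essentially the same route as the paper's own proof: both take the Gaussian model \eqref{eq:6}--\eqref{eq:7} as the hypothesis, form the density ratio $p(\mathbf{Z}_{j}\mid\textbf{E}_{i_j})/p(\mathbf{Z}_{j})=\exp\left(-\alpha\lVert\mathbf{Z}_{j}-\textbf{E}_{i_j}\rVert_2^2+\alpha\lVert\mathbf{Z}_{j}\rVert_2^2\right)$, and observe that the speaker-index-free factor $\exp\left(\alpha\lVert\mathbf{Z}_{j}\rVert_2^2\right)$ cancels between numerator and denominator of the softmax, so the loss reduces to the one built from the proposed $f$. Your added remarks (explicit cancellation of the shared normalization constants, and the reading of the softmax as posterior responsibilities under an equal-weight Gaussian mixture) are consistent embellishments of the same argument rather than a different proof.
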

\begin{proof}
Considering a density ratio $\hat{f}$:
\begin{align*}
\hat{f}(\mathbf{Z}_{j}, \textbf{E}_{i_j}) &\propto \frac{\mathcal{N}(\textbf{E}_{i_j}, (2\alpha)^{-1}\mathbf{I})}{\mathcal{N}(\mathbf{0}, (2\alpha)^{-1}\mathbf{I})}\\
&=\frac{\exp\left(-1/2\cdot(2\alpha)\lVert\mathbf{Z}_{j}-\textbf{E}_{i_j}\rVert_2^2\right)}{\exp\left(-1/2\cdot(2\alpha)\lVert\mathbf{Z}_{j}\rVert_2^2\right)}\\
&=\exp\left(-\alpha\lVert\mathbf{Z}_{j}-\textbf{E}_{i_j}\rVert_2^2+\alpha\lVert\mathbf{Z}_{j}\rVert_2^2\right)
\end{align*}
Now, we evaluate $\mathcal{L}_{\tiny\text{InCE}}$ with the above-defined $\hat{f}(\mathbf{Z}_{j}, \textbf{E}_{i_j})$:

\begin{align*}
    \mathcal{L}_{\tiny\text{InCE}} &= -\mathbb{E}_{\mathcal{D}}\left[\log\left(\frac{\exp\left(-\alpha\lVert\mathbf{Z}_{j}-\textbf{E}_{i_j}\rVert_2^2+\alpha\lVert\mathbf{Z}_{j}\rVert_2^2\right)}{\sum_{i=1}^{N}\exp\left(-\alpha\lVert\mathbf{Z}_{j}-\textbf{E}_{i}\rVert_2^2+\alpha\lVert\mathbf{Z}_{j}\rVert_2^2\right)}\right)\right]\\
    &= -\mathbb{E}_{\mathcal{D}}\left[\log\left(\frac{\exp\left(-\alpha\lVert\mathbf{Z}_{j}-\textbf{E}_{i_j}\rVert_2^2\right)\exp\left(\alpha\lVert\mathbf{Z}_{j}\rVert_2^2\right)}{\sum_{i=1}^{N}\exp\left(-\alpha\lVert\mathbf{Z}_{j}-\textbf{E}_{i}\rVert_2^2\right)\exp\left(\alpha\lVert\mathbf{Z}_{j}\rVert_2^2\right)}\right)\right]\\
    &= -\mathbb{E}_{\mathcal{D}}\left[\log\left(\frac{\exp\left(-\alpha\lVert\mathbf{Z}_{j}-\textbf{E}_{i_j}\rVert_2^2\right)}{\sum_{i=1}^{N}\exp\left(-\alpha\lVert\mathbf{Z}_{j}-\textbf{E}_{i}\rVert_2^2\right)}\right)\right]\\
    &= -\mathbb{E}_{\mathcal{D}}\left[\log\left(\frac{f(\mathbf{Z}_{j}, \textbf{E}_{i_j})}{\sum_{i=1}^{N}f(\mathbf{Z}_{j}, \textbf{E}_{i})}\right)\right]
\end{align*}
\end{proof}
\begin{claim}
\label{claim:3}
With our proposed form of $f(\mathbf{Z}_{j}, \textbf{E}_{i_j})$, minimizing the Tune-InCE loss results in minimizing the distance between the separation embedding $\mathbf{Z}_{j}$ and the corresponding speaker embedding $\textbf{E}_{i_j}$ meanwhile maximizing the distance between other speaker embeddings $\{\textbf{E}_{i} |\forall i\neq i_j\}$.
\end{claim}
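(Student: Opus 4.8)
The plan is to substitute the explicit Gaussian-kernel form of $f$ from Eq.~(\ref{eq:1.2}) into the loss of Eq.~(\ref{eq:1.1}) and read off the two competing forces directly. Writing $d_i := \lVert\mathbf{Z}_{j}-\textbf{E}_{i}\rVert_2^2$ for the squared distances, the logarithm of the softmax ratio collapses to
\begin{align*}
\mathcal{L}_{\tiny\text{InCE}} = \mathbb{E}_{\mathcal{D}}\Bigl[\alpha\, d_{i_j} + \log\sum_{i=1}^{N}\exp(-\alpha\, d_i)\Bigr],
\end{align*}
so the per-sample loss is a function $\ell(d_1,\dots,d_N)$ of the distances alone. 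First I would establish this closed form, which is a one-line simplification identical in spirit to the algebra already carried out in the proof of Claim~\ref{claim:2}.

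Next I would analyze the monotonicity of $\ell$ in each distance by differentiating. Introducing the softmax weights $p_i := \exp(-\alpha d_i)/\sum_{k}\exp(-\alpha d_k)$, a direct computation gives
\begin{align*}
\frac{\partial \ell}{\partial d_{i_j}} = \alpha(1-p_{i_j}) > 0, \qquad \frac{\partial \ell}{\partial d_{i}} = -\alpha\, p_{i} < 0 \quad (i\neq i_j),
\end{align*}
where both signs follow from $0<p_i<1$ and $\alpha>0$. The first inequality says the loss strictly increases with the positive distance $d_{i_j}$, so minimizing $\mathcal{L}_{\tiny\text{InCE}}$ drives $\mathbf{Z}_{j}$ toward its own speaker embedding $\textbf{E}_{i_j}$; the second says the loss strictly decreases with each negative distance $d_i$, so minimization drives $\mathbf{Z}_{j}$ away from every $\textbf{E}_{i}$ with $i\neq i_j$. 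This is exactly the attract-positive / repel-negative behavior the claim asserts.

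As corroboration I would also record the gradient with respect to the embedding itself, $\nabla_{\mathbf{Z}_{j}}\ell = 2\alpha\sum_{i\neq i_j} p_i(\textbf{E}_{i}-\textbf{E}_{i_j})$, whose negative (the descent direction) is a $p_i$-weighted pull toward $\textbf{E}_{i_j}$ and push away from each $\textbf{E}_{i}$, giving a second, geometric view of the same conclusion.

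The one genuine subtlety, and the step I would be most careful about, is that the log-sum-exp denominator also contains the positive term $\exp(-\alpha d_{i_j})$, which on its own would favor \emph{increasing} $d_{i_j}$ and thus opposes the attractive numerator. The resolution is quantitative rather than qualitative: because $p_{i_j}<1$ strictly, the net derivative $\alpha(1-p_{i_j})$ remains positive, so the attractive force always dominates and the pull toward $\textbf{E}_{i_j}$ survives. I would state this competition explicitly, so that the positive sign of $\partial\ell/\partial d_{i_j}$ is seen to be a real claim and not an artifact of ignoring the denominator.
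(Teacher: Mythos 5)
Your proposal is correct, and it starts exactly where the paper's proof starts: substituting the Gaussian form of $f$ into the loss and collapsing it to $\mathbb{E}_{\mathcal{D}}\bigl[\alpha d_{i_j} + \log\sum_{i}\exp(-\alpha d_i)\bigr]$. The difference is in what happens next. The paper stops at a qualitative reading of the two terms: the first term attracts $\mathbf{Z}_j$ to $\mathbf{E}_{i_j}$, and the log-sum-exp term, having negative signs on the distances, ``is responsible for pulling the separation embedding away from all other speaker embeddings.'' You instead differentiate, obtaining $\partial\ell/\partial d_{i_j} = \alpha(1-p_{i_j}) > 0$ and $\partial\ell/\partial d_i = -\alpha p_i < 0$ for $i\neq i_j$, which turns the claim into a precise monotonicity statement. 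This buys something real: the paper's phrasing glosses over the fact that the log-sum-exp denominator also contains the positive pair's term $\exp(-\alpha d_{i_j})$, which by itself is \emph{repulsive} for $d_{i_j}$, so the attraction asserted in the claim is a net effect that needs checking, not an immediate consequence of the first term alone. Your computation of the net derivative $\alpha(1-p_{i_j})$, together with the strict bound $p_{i_j}<1$, closes exactly this gap, and your gradient identity $\nabla_{\mathbf{Z}_j}\ell = 2\alpha\sum_{i\neq i_j}p_i(\mathbf{E}_i-\mathbf{E}_{i_j})$ (which is correct, using $\sum_i p_i = 1$) gives a clean geometric confirmation that the paper lacks. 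In short: same decomposition, but your derivative-based argument is strictly more rigorous than the paper's and repairs an implicit imprecision in it.
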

\begin{proof}
By substituting Eq. (\ref{eq:1.2}) into Eq. (\ref{eq:1.1}), we have
\begin{align*}
\mathcal{L}_{\tiny\text{InCE}} &= -
    \mathbb{E}_{\mathcal{D}}\left[\log\left(\frac{\exp\left(-\alpha\lVert\mathbf{Z}_{j}- \textbf{E}_{i_j}\rVert_{2}^{2}\right)}{\sum_{i=1}^{N}\exp\left(-\alpha\lVert\mathbf{Z}_{j}- \textbf{E}_{i}\rVert_{2}^{2}\right)}\right)\right]\\
    &= 
    \mathbb{E}_{\mathcal{D}}\left[\alpha\lVert\mathbf{Z}_{j}- \textbf{E}_{i_j}\rVert_{2}^{2}\right]\\
    &\,\,\,\,+\mathbb{E}_{\mathcal{D}}\left[\log\left(\sum_{i=1}^{N}\exp\left(-\alpha\lVert\mathbf{Z}_{j}- \textbf{E}_{i}\rVert_{2}^{2}\right)\right)\right],
\end{align*}
which consists of two terms: (1) the first term is a scaled Euclidean distance with a scalar $\alpha > 0$; (2) the second term is a logarithmic sum of exponentials.
\par
The first term can be used to minimize the Euclidean distance between any speaker embedding and its corresponding separation embeddings. On the contrary, considering the second term, since there is a negative sign on the Euclidean distance, we can see that it is responsible for pulling the separation embedding away from all other speaker embeddings.
\end{proof}

Besides, we can also relate our Tune-InCE loss with some existing works.

\begin{claim}
\label{claim:4}
Tune-InCE loss can be seen as a rescaled $l$-2 normalization of InfoNCE loss proposed in \cite{nce18}.
\end{claim}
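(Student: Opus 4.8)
The plan is to start from the definition of the InfoNCE loss in \cite{nce18} and show that $\mathcal{L}_{\tiny\text{InCE}}$ coincides with it once the embeddings are $\ell_2$-normalized and a scale factor is absorbed. Recall that InfoNCE employs a log-bilinear (dot-product) scoring function, of the form $\exp(\mathbf{Z}_{j}^{\top}\textbf{E}_{i})$, inside exactly the same softmax-over-negatives structure as Eq.~(\ref{eq:1.1}). Hence it suffices to show that our squared-Euclidean form of Eq.~(\ref{eq:1.2}) reduces to this dot-product form under normalization.

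First I would expand the squared norm, $\lVert\mathbf{Z}_{j}-\textbf{E}_{i}\rVert_2^2 = \lVert\mathbf{Z}_{j}\rVert_2^2 - 2\mathbf{Z}_{j}^{\top}\textbf{E}_{i} + \lVert\textbf{E}_{i}\rVert_2^2$, and correspondingly factor the scoring function as $f(\mathbf{Z}_{j},\textbf{E}_{i}) = \exp(-\alpha\lVert\mathbf{Z}_{j}\rVert_2^2)\cdot\exp(2\alpha\,\mathbf{Z}_{j}^{\top}\textbf{E}_{i})\cdot\exp(-\alpha\lVert\textbf{E}_{i}\rVert_2^2)$. Second, I would observe that the leading factor $\exp(-\alpha\lVert\mathbf{Z}_{j}\rVert_2^2)$ does not depend on the summation index $i$, so it appears identically in the numerator and in every term of the denominator of Eq.~(\ref{eq:1.1}) and therefore cancels in the softmax---precisely the same cancellation already used in the proof of Claim~\ref{claim:2}.

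Third, under the $\ell_2$-normalization $\lVert\textbf{E}_{i}\rVert_2 = 1$ for all $i$, the trailing factor $\exp(-\alpha\lVert\textbf{E}_{i}\rVert_2^2) = \exp(-\alpha)$ becomes a constant independent of $i$ and likewise cancels. What then remains inside the logarithm is exactly $\exp(2\alpha\,\mathbf{Z}_{j}^{\top}\textbf{E}_{i_j})\big/\sum_{i=1}^{N}\exp(2\alpha\,\mathbf{Z}_{j}^{\top}\textbf{E}_{i})$, i.e., the InfoNCE objective with the dot-product score rescaled by the learnable factor $2\alpha$. If $\mathbf{Z}_{j}$ is normalized as well, then $\mathbf{Z}_{j}^{\top}\textbf{E}_{i}$ is the cosine similarity and $2\alpha$ plays the role of the inverse temperature $1/\tau$, recovering the standard cosine-similarity variant of InfoNCE.

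The main obstacle is conceptual rather than computational: I must make explicit that the equivalence is exact only after $\ell_2$-normalizing the embeddings, and spell out that the learnable scaler $\alpha$ is what supplies the ``rescaling'' named in the statement. Without normalization, the residual $\exp(-\alpha\lVert\textbf{E}_{i}\rVert_2^2)$ term biases the contrastive comparison toward low-norm embeddings, so the correspondence holds precisely when the norms are held fixed. I would close by noting that this recovers the CPC/InfoNCE density-ratio interpretation already invoked in Assumption~\ref{ass:1} and Claim~\ref{claim:1}, confirming that Tune-InCE is a regularized, Euclidean-distance reparametrization of InfoNCE.
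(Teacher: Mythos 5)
Your proposal is correct, and its first half --- expanding $\lVert\mathbf{Z}_{j}-\mathbf{E}_{i}\rVert_2^2$ so that $f(\mathbf{Z}_{j},\mathbf{E}_{i})$ factors into $\exp(-\alpha\lVert\mathbf{Z}_{j}\rVert_2^2)\cdot\exp(2\alpha\,\mathbf{Z}_{j}^\top\mathbf{E}_{i})\cdot\exp(-\alpha\lVert\mathbf{E}_{i}\rVert_2^2)$ --- is exactly the computation in the paper's proof. Where you genuinely diverge is in what you do with this identity. The paper stops at the scoring-function level: it writes $f(\mathbf{Z}_{j},\mathbf{E}_{i_j}) = f_{\text{InfoNCE}}(\mathbf{Z}_{j},\mathbf{E}_{i_j})^{2\alpha}\big/\exp\bigl(\alpha\lVert\mathbf{Z}_{j}\rVert_2^2+\alpha\lVert\mathbf{E}_{i_j}\rVert_2^2\bigr)$ and reads this identity itself as the claim, with the exponent $2\alpha$ supplying the ``rescaling'' and the division by the exponentiated squared norms supplying the ``$\ell$-2 normalization''; no further assumption is made. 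You instead descend to the loss level: you cancel the $i$-independent factor $\exp(-\alpha\lVert\mathbf{Z}_{j}\rVert_2^2)$ inside the softmax of Eq.~(\ref{eq:1.1}) (the same cancellation used in the proof of Claim~\ref{claim:2}), then impose the hypothesis $\lVert\mathbf{E}_{i}\rVert_2=1$ for all $i$ so that the remaining norm factor also cancels, obtaining literal equality of $\mathcal{L}_{\tiny\text{InCE}}$ with an InfoNCE objective whose score is rescaled by $2\alpha$ (an inverse temperature). Your route buys a sharper and more honest statement --- you identify precisely \emph{when} the equivalence is exact, and your remark that without normalization the residual $\exp(-\alpha\lVert\mathbf{E}_{i}\rVert_2^2)$ biases the contrastive comparison toward low-norm embeddings is correct and is something the paper's proof glosses over --- but it comes at the cost of a hypothesis the actual system does not satisfy: in Tune-In the speaker embeddings $\mathbf{E}_{i}$ are EMA-updated and never $\ell_2$-normalized, so your exact equivalence is conditional, whereas the paper's weaker, purely algebraic identity holds unconditionally and is all that its informal ``can be seen as'' wording requires.
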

\begin{proof}
Our Tune-InCE loss uses a different form of $f(\mathbf{Z}_{j}, \textbf{E}_{i_j})$ than the InfoNCE loss. The latter has
$$f_\text{InfoNCE}(\mathbf{Z}_{j}, \textbf{E}_{i_j})=\exp\left(\mathbf{Z}_{j}^\top \textbf{E}_{i_j}\right),$$
Now, by expanding our proposed $f(\mathbf{Z}_{j}, \textbf{E}_{i_j})$, we get
\begin{align*}
f(\mathbf{Z}_{j}, \textbf{E}_{i_j})&=\exp\left(-\alpha\lVert\mathbf{Z}_{j}- \textbf{E}_{i_j}\rVert_{2}^{2}\right)\\
&=\exp\left(-\alpha(\lVert\mathbf{Z}_{j}\rVert_{2}^{2}+\lVert\textbf{E}_{i_j}\rVert_{2}^{2})+2\alpha\mathbf{Z}_{j}^\top \textbf{E}_{i_j}\right)\\
&=\frac{\exp\left(2\alpha\mathbf{Z}_{j}^\top \textbf{E}_{i_j}\right)}{\exp\left(\alpha\lVert\mathbf{Z}_{j}\rVert_{2}^{2}+\alpha\lVert\textbf{E}_{i_j}\rVert_{2}^{2}\right)}\\
&=\frac{f_\text{InfoNCE}(\mathbf{Z}_{j}, \textbf{E}_{i_j})^{2\alpha}}{\exp\left(\alpha\lVert\mathbf{Z}_{j}\rVert_{2}^{2}+\alpha\lVert\textbf{E}_{i_j}\rVert_{2}^{2}\right)}
\end{align*}
\end{proof}

\subsection{Datasets and Model Setup}
\subsubsection{Datasets}
We used a benchmark 8kHz dataset WSJ0-2mix \cite{hershey2016deep} for comparison with state-of-the-art source separation systems. It consists of $30$ hours of training set comprised of $20000$ utterances from $101$ speakers, $10$ hours of validation set consisting of $5000$ utterances from the same $101$ speakers, and $5$ hours of test data comprised of $3000$ utterances from $18$ speakers unseen in training. We also used a large-scale publicly available 16kHz benchmark dataset Librispeech \cite{panayotov2015librispeech}. Note that our separation task on Librispeech was much more challenging than that on the recently proposed LibriMix dataset \cite{cos2020librimix} because of much fewer utterances per speaker reducing from $55$ to $5$ for training, but our setting was regarded more realistic as it was often hard to collect numerous utterances from the users in real-world applications. 

\subsubsection{Model Setup}
The encoder and decoder structure, as well as the model's hyper-parameter settings, were directly inherited from DPRNN's setup \cite{luo2019dual} for comparison purposes. Note that no model hyper-parameter has been fine-tuned towards our proposed structure, otherwise more improvement could be reasonably expected for ours. We used $6$ consecutive GALR blocks, where $B=4$, $B_1=2$, $B_2=2$ blocks are used for the MSM generic space, the speaker-knowledge space, and the speech-stimuli space, respectively. 

In Table \ref{tab:winlen}, we compared the performance by GALR to that by DPRNN, each with various window length (i.e., number of time-domain samples in each window frame) settings. It is notable that, as the window length gets smaller, the sequence of the window frames that GALR operates on would also become longer and thus increase the computational costs proportionally. In the paper, for WSJ0-2mix, we tried all window length configurations reported in DPRNN and found that the best configuration is to use a 4-sample (5ms) window length for encoding. Correspondingly, we set other correlated hyper-parameters in the Tune-In model as $D=128, K=256, Q=8$. It achieved SOTA results with a remarkable reduction of model size relative to the best configuration (2-sample window) of DPRNN \cite{luo2019dual}. In particular, the best Tune-In model entails $11.5\%$ smaller model size, $62.9\%$ less run-time memory, and $66.4\%$ fewer computational operations. For WSJ0-music, the experiment is conducted for an apple-to-apple comparison, where both DPRNN and GALR are trained on our implementation with equal settings, where a window length of 16 samples and $D=128, K=64, Q=32$ are used for efficient training and evaluation. For Librispeech, due to its different sampling rate, we used a 8-sample (5ms) window length and set $D=128, K=128, Q=16$.

For the hyper-parameters related to the Tune-In losses, we empirically set $\gamma=3$, $\lambda=10$, $\varepsilon=0.05$. In each training epoch, mixture signals lasting $4$s were generated online by masking each clean utterance in the training set with a different random utterance from the same training set at random starting positions, and SIR was randomly selected from a uniform distribution of 0 to 5dB. For testing, mixture signals were pre-mixed with SIR ranging from 0 to 5dB using samples in the test set.

\subsubsection{Training Details}
All the models were trained on NVIDIA Tesla V100 GPU devices using PyTorch. For the separation task, we referred to the training protocol in \cite{luo2019dual}. We use an Adam optimizer \cite{kingma2014adam} with an initial learning rate of $1e^{-3}$ and a weight decaying rate of $1e^{-6}$. The training was considered converged when no lower validation loss can be observed in 10 consecutive epochs. A gradient clipping method was used to ensure the maximum l2-norm of each gradient is less than 5. All models were assessed in terms of SI-SNRi and SDRi \cite{le2019sdr}.

\subsection{Ablation Study on Summarizers Along Inter- and Intra-segment}
Why should self-attention and RNN be considered better summarizers along the inter- and intra-segment dimension, respectively?
Table \ref{tab:summ} gives the dissected results of SI-SNRi on WSJ0-2mix when permuting Bi-LSTM and self-attention model as summarizers along inter- and intra-segment. For efficiency concern, we used a window length of $16$ samples for all systems implemented in this ablation study. The best result is in the lower-left corner and corresponding to the performance of GALR, which is also reported in Table \ref{tab:winlen}. We have also examined replacing Bi-LSTM with CNN as a summarizer within the frame: the best performing Temporal Convolutional Network (TCN) \cite{lea2017temporal} had $18\%$ lower (worse) SI-SNRi, but achieved a speed-up by $26\%$.
\begin{table}[h!]
\centering
\caption{SI-SNRi results on WSJ0-2mix when permuting Bi-LSTM and self-attention model in local and global modeling.}
\label{tab:summ}
\begin{tabular}{c|cc}
\specialrule{.13em}{0em}{0em} 
\makecell{\bf Model} & \makecell{Local RNN} & \makecell{Local Self-Attn} \\
\hline
Global RNN & 15.9 & 12.3  \\
Global Self-Attn & \textbf{17.0} & 14.6 \\
\specialrule{.13em}{0em}{0em} 
\end{tabular}
\end{table}

\subsection{Effectiveness of Dual-Attention}
To study the effectiveness of the proposed Dual-Attention mechanism, we compared it with the recent feature-wise linear modulation (FiLM) \cite{Ethan2017film} method, the latter of which is a solid method that has been successfully applied in both visual reasoning \cite{Ethan2017film} and speech processing domain \cite{zeg2020wavesplit}.
\begin{figure}[bh!]
    \centering
    \begin{tikzpicture}
    \begin{axis}[
        xlabel={Epoch},
        ylabel={valid SI-SNR},
        ymin=0, ymax=15.5,
        legend pos=south east,
        ymajorgrids=true,
        xmajorgrids=true,
        grid style=dashed,
        legend cell align={left},
    ]
    
    \addplot[
        color=red,
        ]
        coordinates {
            (1,0.02)
            (2,5.68)
            (3,9.21)
            (4,10.37)
            (5,10.88)
            (6,11.11)
            (7,11.36)
            (8,11.50)
            (9,11.71)
            (10,11.80)
            (11,11.79)
            (12,11.81)
            (13,12.02)
            (14,12.00)
            (15,12.07)
            (16,12.19)
            (17,12.21)
            (18,12.24)
            (19,12.36)
            (20,12.28)
            (21,12.39)
            (22,12.32)
            (23,12.40)
            (24,12.37)
            (25,12.47)
            (26,12.53)
            (27,12.51)
            (28,12.52)
            (29,12.58)
            (30,12.61)
            (31,12.62)
            (32,12.52)
            (33,12.55)
            (34,12.74)
            (35,12.69)
            (36,12.67)
            (37,12.68)
            (38,12.71)
            (39,12.78)
            (40,12.77)
            (41,12.79)
            (42,12.83)
            (43,12.76)
            (44,12.80)
            (45,12.86)
            (46,12.84)
            (47,12.86)
            (48,12.92)
            (49,12.84)
            (50,12.93)
            (51,12.96)
            (52,12.91)
            (53,12.95)
            (54,12.97)
            (55,13.01)
            (56,13.05)
            (57,13.00)
            (58,13.00)
            (59,13.01)
            (60,13.10)
            (61,13.05)
            (62,13.10)
            (63,13.13)
            (64,13.15)
            (65,13.09)
            (66,13.15)
            (67,13.16)
            (68,13.16)
            (69,13.19)
            (70,13.10)
            (71,13.24)
            (72,13.15)
            (73,13.06)
            (74,13.18)
            (75,13.16)
            (76,13.21)
            (77,13.20)
            (78,13.20)
            (79,13.24)
            (80,13.21)
        };
    \addlegendentry{FiLM}
    \node[label={-90:{13.24}},color=red,circle,fill,inner sep=1.3pt] at (axis cs:71,13.24) {};
    \addplot[
        color=purple,
        ]
        coordinates {
            (1,7.78)
            (2,8.58)
            (3,9.24)
            (4,8.22)
            (5,9.61)
            (6,9.80)
            (7,9.24)
            (8,9.77)
            (9,9.63)
            (10,10.11)
            (11,9.93)
            (12,10.14)
            (13,10.10)
            (14,9.92)
            (15,10.10)
            (16,9.93)
            (17,10.02)
            (18,10.58)
            (19,10.27)
            (20,10.39)
            (21,10.60)
            (22,10.29)
            (23,10.51)
            (24,10.22)
            (25,10.42)
            (26,10.75)
            (27,10.76)
            (28,10.51)
            (29,10.79)
            (30,10.54)
            (31,10.59)
            (32,10.64)
            (33,10.70)
            (34,10.74)
            (35,11.03)
            (36,10.83)
            (37,10.69)
            (38,10.87)
            (39,10.75)
            (40,10.54)
            (41,10.78)
            (42,10.77)
            (43,10.90)
            (44,10.76)
            (45,10.71)
        };
    \addlegendentry{GA with FiLM}
    \node[label={-90:{11.03}},color=purple,circle,fill,inner sep=1.3pt] at (axis cs:35,11.03) {};
    \addplot[
        color=blue,
        ]
        coordinates {
            (1.00,0.76)
            (2.00,8.36)
            (3.00,11.03)
            (4.00,11.88)
            (5.00,12.29)
            (6.00,12.47)
            (7.00,12.61)
            (8.00,12.84)
            (9.00,12.50)
            (10.00,12.80)
            (11.00,12.89)
            (12.00,12.92)
            (13.00,13.01)
            (14.00,12.95)
            (15.00,13.12)
            (16.00,13.16)
            (17.00,13.14)
            (18.00,13.24)
            (19.00,13.26)
            (20.00,13.07)
            (21.00,13.15)
            (22.00,13.26)
            (23.00,13.20)
            (24.00,13.09)
            (25.00,13.43)
            (26.00,13.27)
            (27.00,13.21)
            (28.00,13.39)
            (29.00,13.41)
            (30.00,13.32)
            (31.00,13.30)
            (32.00,13.42)
            (33.00,13.41)
            (34.00,13.06)
            (35.00,13.52)
            (36.00,13.34)
            (37.00,13.52)
            (38.00,13.45)
            (39.00,13.42)
            (40.00,13.44)
            (41.00,13.46)
            (42.00,13.46)
            (43.00,13.42)
            (44.00,13.50)
            (45.00,13.52)
            (46.00,13.36)
            (47.00,13.49)
            (48.00,13.60)
            (49.00,13.58)
            (50.00,13.60)
            (51.00,13.70)
            (52.00,13.62)
            (53.00,13.53)
            (54.00,13.67)
            (55.00,13.55)
            (56.00,13.67)
            (57.00,13.77)
            (58.00,13.66)
            (59.00,13.58)
            (60.00,13.67)
            (61.00,13.67)
            (62.00,13.65)
            (63.00,13.59)
            (64.00,13.61)
            (65.00,13.72)
            (66.00,13.53)
            (67.00,13.58)
            (68.00,13.87)
            (69.00,13.69)
            (70.00,13.75)
            (71.00,13.86)
            (72.00,13.80)
            (73.00,13.83)
            (74.00,13.73)
            (75.00,13.82)
            (76.00,13.66)
            (77.00,13.74)
        };
    \addlegendentry{DualAttn}
    \node[label={90:{13.87}},color=blue,circle,fill,inner sep=1.3pt] at (axis cs:68,13.87) {};
    \end{axis}
    \end{tikzpicture}
    \caption{
    Validation SI-SNR comparison among 1. FiLM, 2. GA with FiLM, i.e., FiLM inside globally attentive layer, and 3. dual attention (DualAttn). The converged best valid SI-SNR scores are marked as dots for each method.}
\label{fig:valid}
\end{figure}
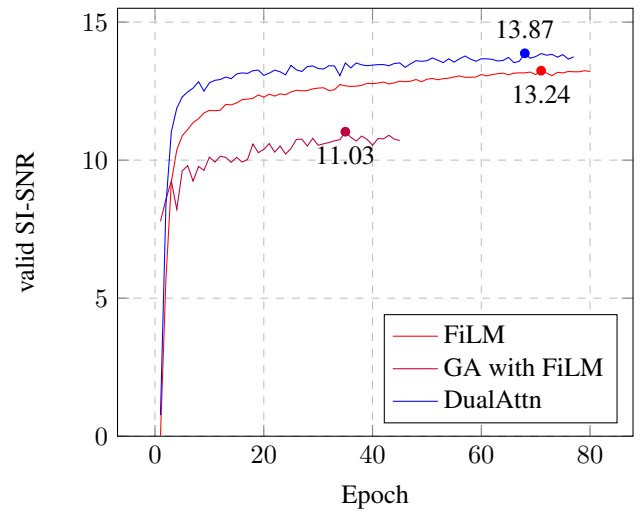

The Dual-Attention mechanism took the form of the following:
\begin{align}
\mathbf{A}^{(b)}_j&=\text{LayerNorm}(r(\mathbf{Z}_j)\odot\mathbf{G}^{(b)}+ h(\mathbf{Z}_j)) \label{eq:adualattn}\\
    \mathbf{a}_{\textbf{dual},j}&=\text{softmax}\left(\text{Query}(\mathbf{G}^{(b)})^\top \text{Key}(\mathbf{A}^{(b)}_j)\right)\\
    \hat{\mathbf{X}}^{(b+1)}_j&=\sum_S \mathbf{a}_{\textbf{dual},j}\cdot \text{Value}(\mathbf{A}^{(b)}_j),
\end{align}
where $\mathbf{A}^{(b)}_j$ was modified from the original FiLM formula with $r(\cdot)$ and $h(\cdot)$ being two learnable linear mappings.

For comparison, the FiLM mechanism applying to the information flow among the GALR cells took the form of the following:
\begin{align}
    \mathbf{a}_{\textbf{dual},j}&=\text{softmax}\left(\text{Query}(\mathbf{G}^{(b)})^\top \text{Key}(\mathbf{G}^{(b)})\right)\\
    \bar{\mathbf{X}}^{(b+1)}&=\sum_S \mathbf{a}_{\textbf{dual},j}\cdot \text{Value}(\mathbf{G}^{(b)}),\\
    \hat{\mathbf{X}}^{(b+1)}_j&=\text{PReLU}(r(\mathbf{Z}_j)\odot\bar{\mathbf{X}}^{(b+1)}+ h(\mathbf{Z}_j)),
\end{align}

For substantial comparison, we also investigated an alternative way of applying FiLM to modulate information in the speech stimuli space, where FiLM took effect inside the globally attentive layer (namely ``GA with FiLM") as follows.
\begin{align}
\mathbf{F}^{(b)}_j&=\text{PReLU}(r(\mathbf{Z}_j)\odot\mathbf{G}^{(b)}+ h(\mathbf{Z}_j))\label{eq:maskedF}\\
    \mathbf{a}_{\textbf{dual},j}&=\text{softmax}\left(\text{Query}(\mathbf{G}^{(b)})^\top \text{Key}(\mathbf{F}^{(b)}_j)\right)\label{eq:filmsoftmax}\\
    \hat{\mathbf{X}}^{(b+1)}_j&=\sum_S \mathbf{a}_{\textbf{dual},j}\cdot \text{Value}(\mathbf{F}^{(b)}_j),\label{eq:filmout}
\end{align}

As shown in Fig. \ref{fig:valid}, the proposed Dual-Attention method achieved substantial improvements in terms of sustainable training stability as well as a higher converged validation SI-SNR score. In comparison, despite a worse SI-SNR score, FiLM is still a solid method when applying to the intermediate representations in between each GALR cell. However, when applying FiLM inside the GALR cell (GA with FiLM), the training stability and performance notably dropped.

Consequently, we inspected the underlying key factor that leads to the above performance gap. Since $\mathbf{Z}_j$, the steering vector from the speaker-knowledge space, was computed from the cross attention w.r.t. to the $j$-th speaker, the $r(\cdot)$ and $h(\cdot)$ together modulated $\mathbf{G}^{(b)}$ to condition on the $j$-th speaker in the speech-stimuli space. In this case, we conceive that the ReLU activation function proposed in the original FiLM is not suitable here, since the masked $\mathbf{F}^{(b)}_j$ in Eq. \ref{eq:maskedF} could induce high sparsity in the softmax matrix $\mathbf{a}_{\textbf{dual},j}$ in Eq. \ref{eq:filmsoftmax}. Followed from a sparse $\mathbf{a}_{\textbf{dual},j}$, the thresholding effect inside an attention mechanism would limit the learning capacity of the globally attentive layer and is prone to over-fitting. 

As shown in Fig. \ref{fig:DualAttn}, an effective method should have obtained relatively higher softmax scores on current time points than other time points since the two attention targets ($\mathbf{F}^{(b)}_j$ v.s. $\mathbf{G}^{(b)}$) were aligned in self-attention. We can observe from the heat map in Fig. \ref{fig:DualAttn} that, in the case of ``GA with FiLM", the softmax matrix appears to be relatively sparse as some values become constants after the element-wise PReLU activation. The diagonal weights are also weak since the attention model was forced to be ``focused" only on limited targets.
In contrast, as seen from the heat map, the proposed Dual-Attention technique produced a more interpretable and sensible emphasis of diagonal attention weights. Moreover, the Dual-Attention itself can be viewed as a non-linear activation w.r.t. $\mathbf{A}^{(b)}_j$ in Eq. \ref{eq:adualattn}, which essentially remedied the usage of ReLU in FiLM \cite{Ethan2017film} yet was more effective when applied to our Tune-In system.
\begin{figure*}[ht]
    \centering 
    \begin{minipage}{0.03\textwidth}
        \large\rotatebox{90}{GA with FiLM}
    \end{minipage}
    \begin{minipage}{0.96\textwidth}
    \begin{subfigure}{.49\textwidth}
      \centering
      \includegraphics[width=1.1\linewidth]{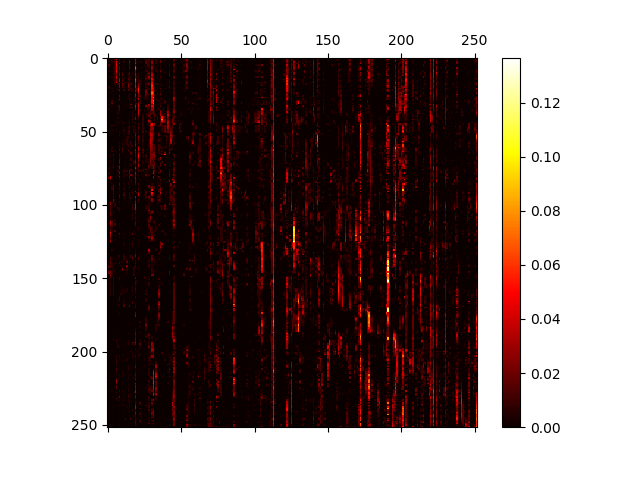}  
    \end{subfigure}
    \begin{subfigure}{.49\textwidth}
      \centering
      \includegraphics[width=1.1\linewidth]{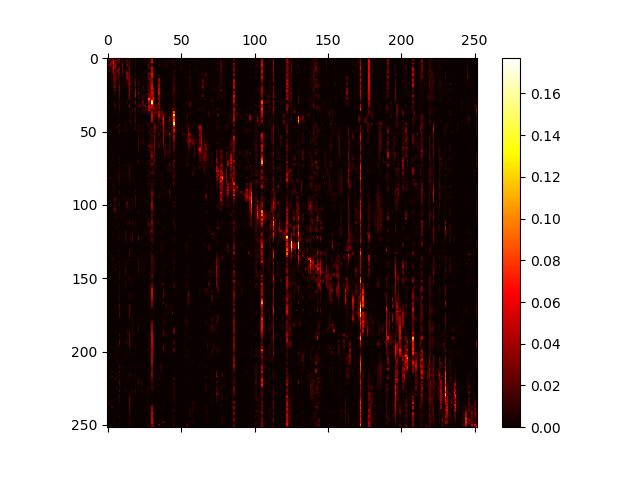}  
    \end{subfigure}
    \end{minipage}
    \begin{minipage}{0.03\textwidth}
        \large\rotatebox{90}{DualAttn}
    \end{minipage}
    \begin{minipage}{0.96\textwidth}
    \begin{subfigure}{.49\textwidth}
      \centering
      \includegraphics[width=1.1\linewidth]{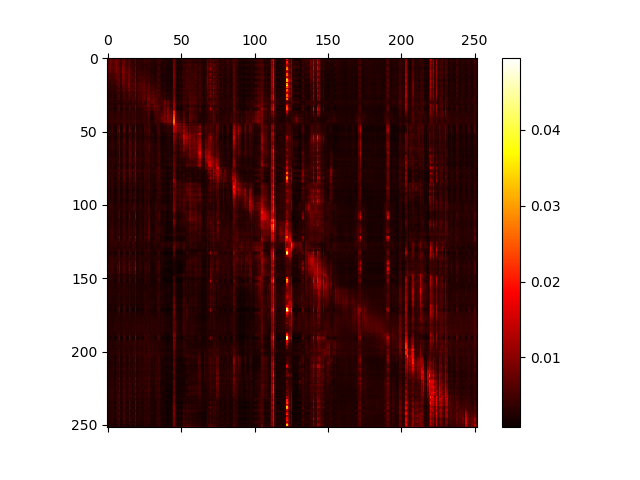}  
      \caption{The first speaker}
    \end{subfigure}
    \begin{subfigure}{.49\textwidth}
      \centering
      \includegraphics[width=1.1\linewidth]{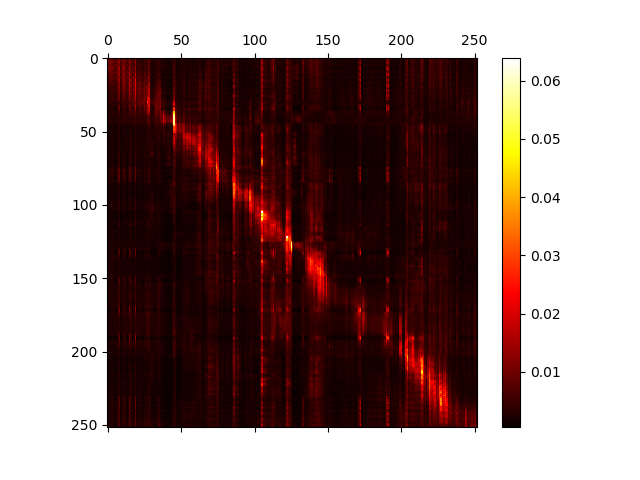}  
      \caption{The second speaker}
    \end{subfigure}
    \end{minipage}
    \caption{Averaged softmax weights of two concurrent speakers in dual attention base on GA with FiLM (1st row) and DualAttn (2nd row). In the case of DualAttn, relatively strong attention weights can be observed at the diagonal.}
    \label{fig:DualAttn}
\end{figure*}

\subsection{Automatically Learnt Stimuli-based Selective Cross Attention}
Auditory selective attention has been widely studied \cite{mesga2012nature, costa2013tunein, obleser2015selective, sullivan2015eeg} in behavioral and cognitive neurosciences. It is about the manner that a human is not able to listen to, or remember two concurrent speech streams, while listeners usually select the attended speech and ignore other sounds in a complex auditory scene. 

Although our proposed Tune-in system takes no regularization regarding the above manner, we observed an interesting phenomenon that a similar selective bottom-up cross attention could be automatically learnt based on the stimuli. We plot the cross-attention curves in Fig. \ref{fig:male_female}-\ref{fig:CrossAttn1} by averaging the softmax over the $S$ embedding length to obtain an $S_j$-dimensional attention vector for each speaker $j$. The curves were generated from an online mode, in which case $S_j$ was equal to $S$. By aligning the $S$ segments (each of size $K$, and in all $S*K*M/4$ samples) along with the time axis, we plotted the $C$ attention curves along with the raw input signal. As shown in Fig. \ref{fig:male_female}-\ref{fig:CrossAttn1}, at any given time segment, it is generally the most salient target in the mixture that triggers the corresponding cross-attention curve. In places where both sources are soft, both cross-attention curves are low to ignore this place as it could be noisy and unreliable. Notably, there is hardly any time point where both cross-attention curves are raised. 

Note that a human can also perform top-down auditory selective attention, such as that based on a task-relevant stimulus. To discriminate from this, here we call our mechanism a stimuli-based bottom-up selective cross attention because the selection is purely based on information from the bottom-up signal, and we leave top-down selective attention for future research work.
\begin{figure*}[ht]
\label{fig:4plots}
      \begin{subfigure}{0.49\textwidth}
      \centering
      \includegraphics[width=\linewidth]{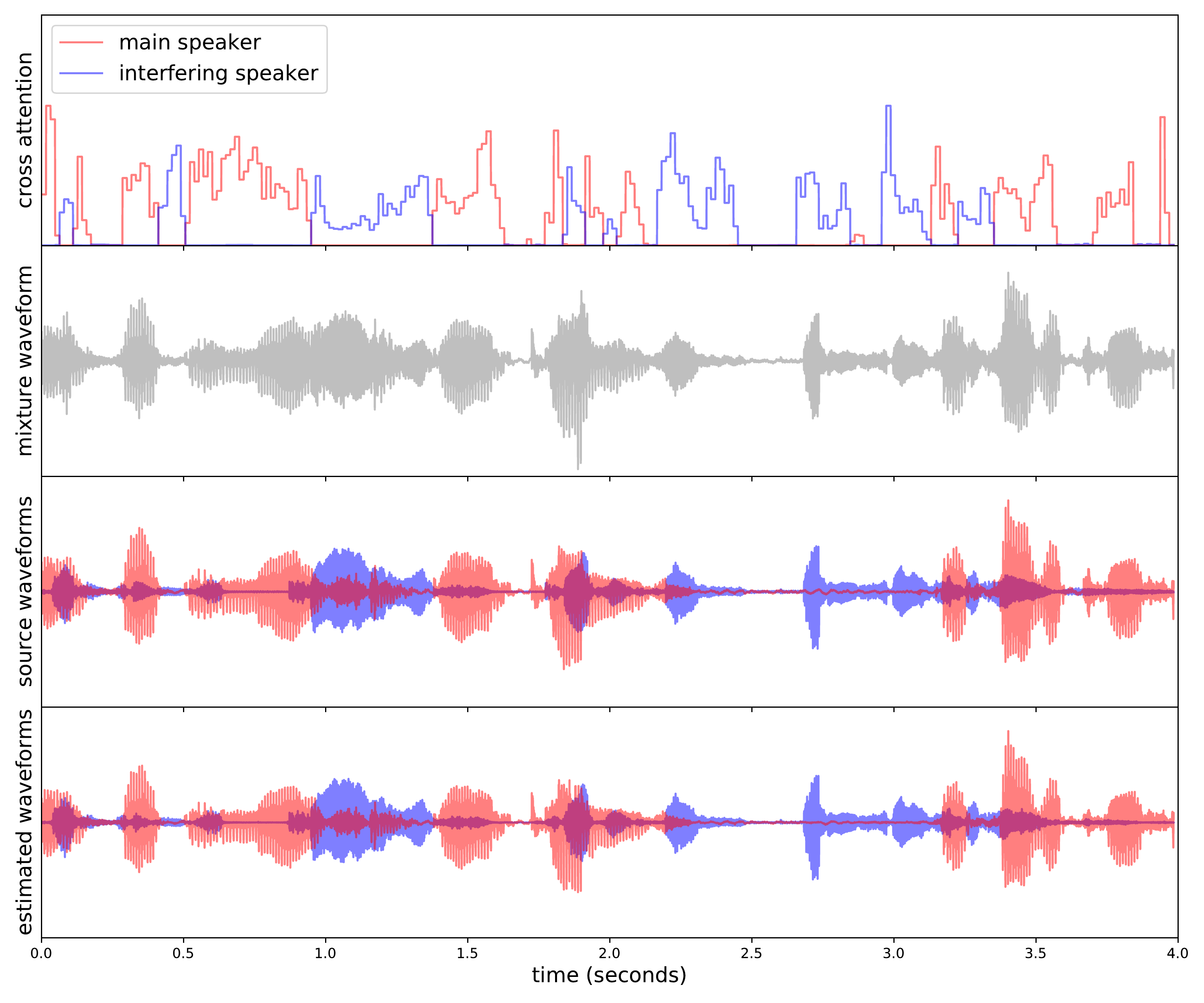}         
      \caption{A male speaker interfered by a female}
      \label{fig:male_female}
      \end{subfigure}
    \begin{subfigure}{0.49\textwidth}
      \centering
       \includegraphics[width=\linewidth]{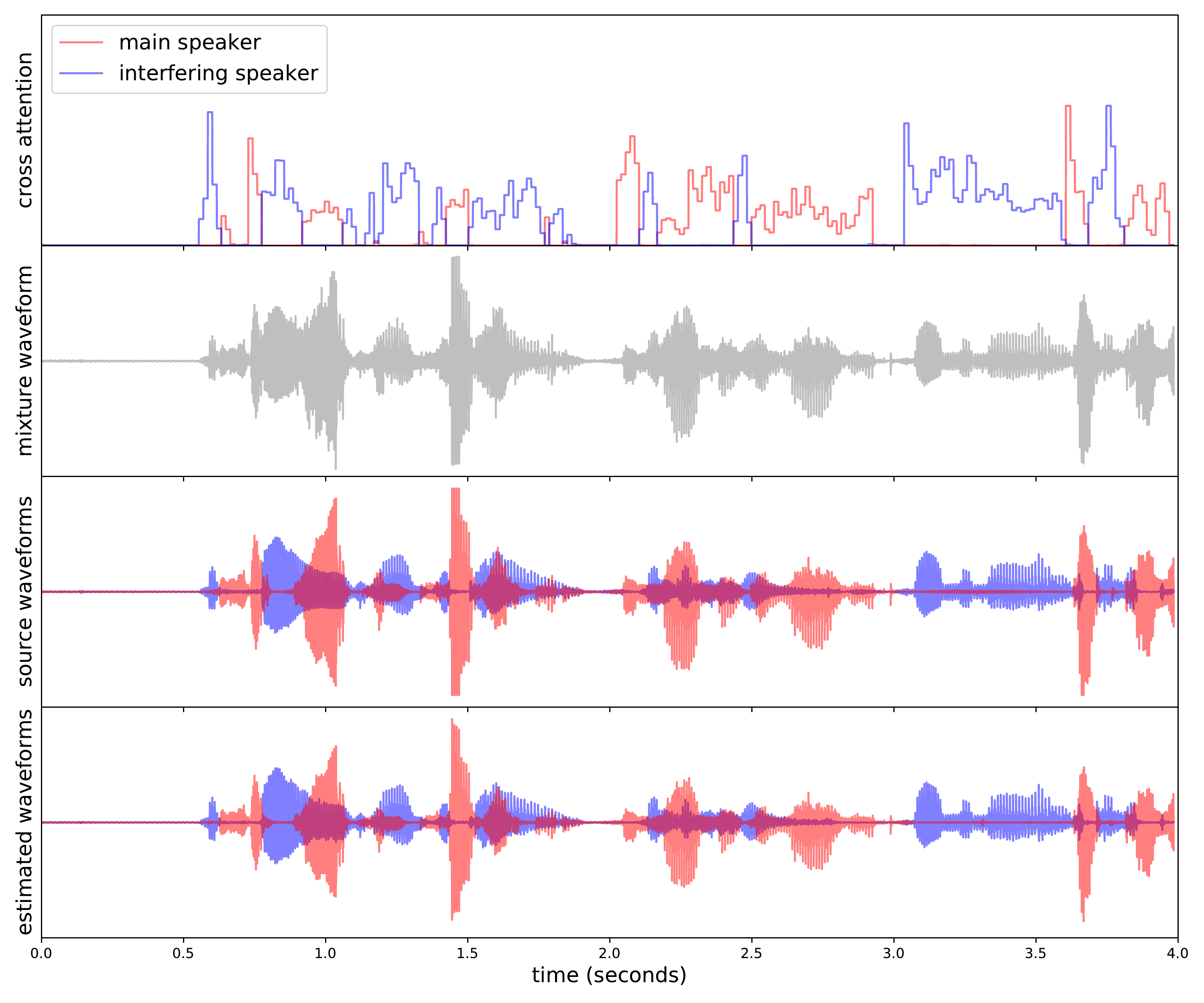} 
       \caption{A male speaker interfered by another male}  
    \end{subfigure}
      \begin{subfigure}{0.49\textwidth}
      \centering
      \includegraphics[width=\linewidth]{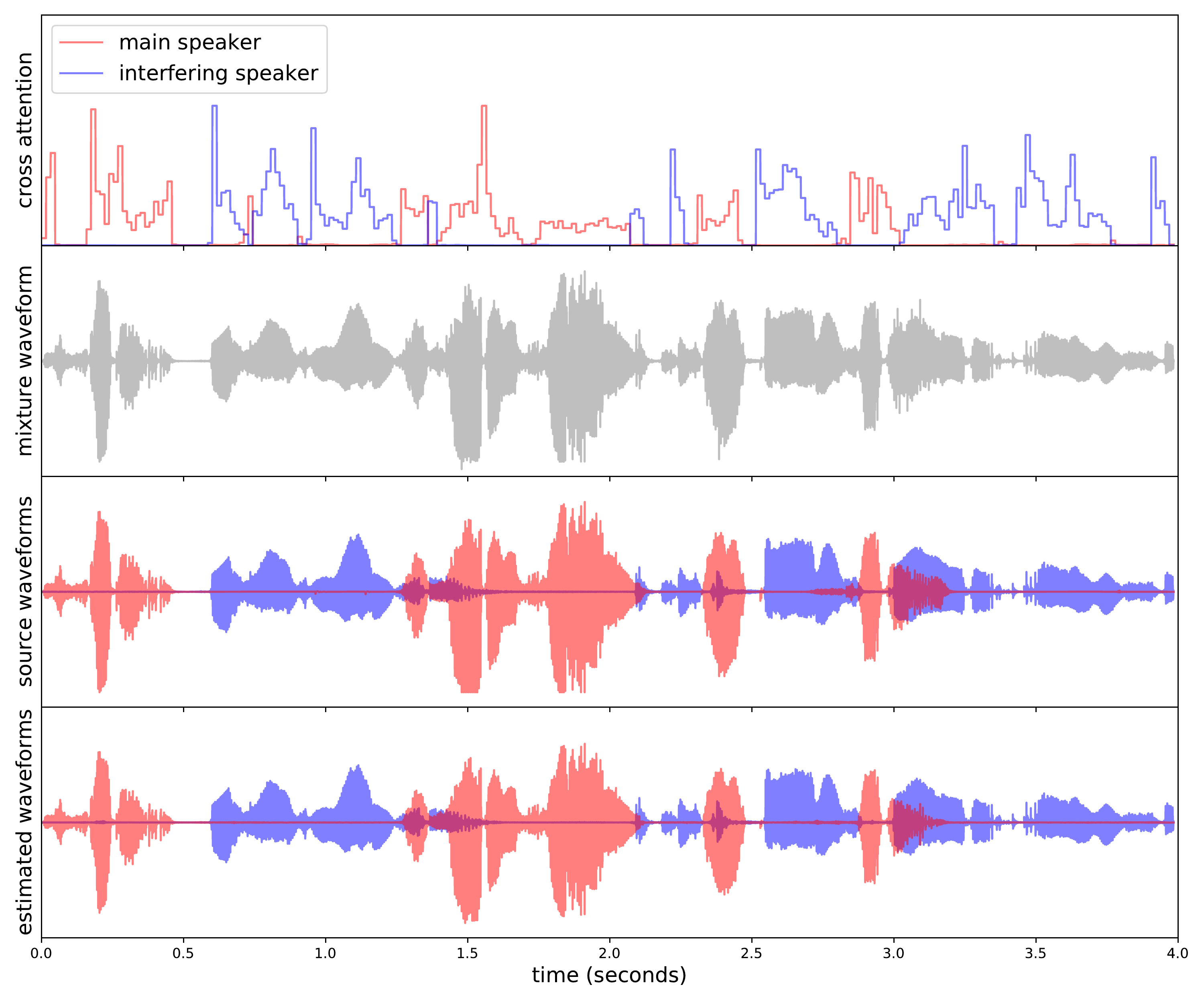} 
      \caption{A female speaker interfered by another female}
      \end{subfigure}
      \begin{subfigure}{0.49\textwidth}
      \centering
      \includegraphics[width=\linewidth]{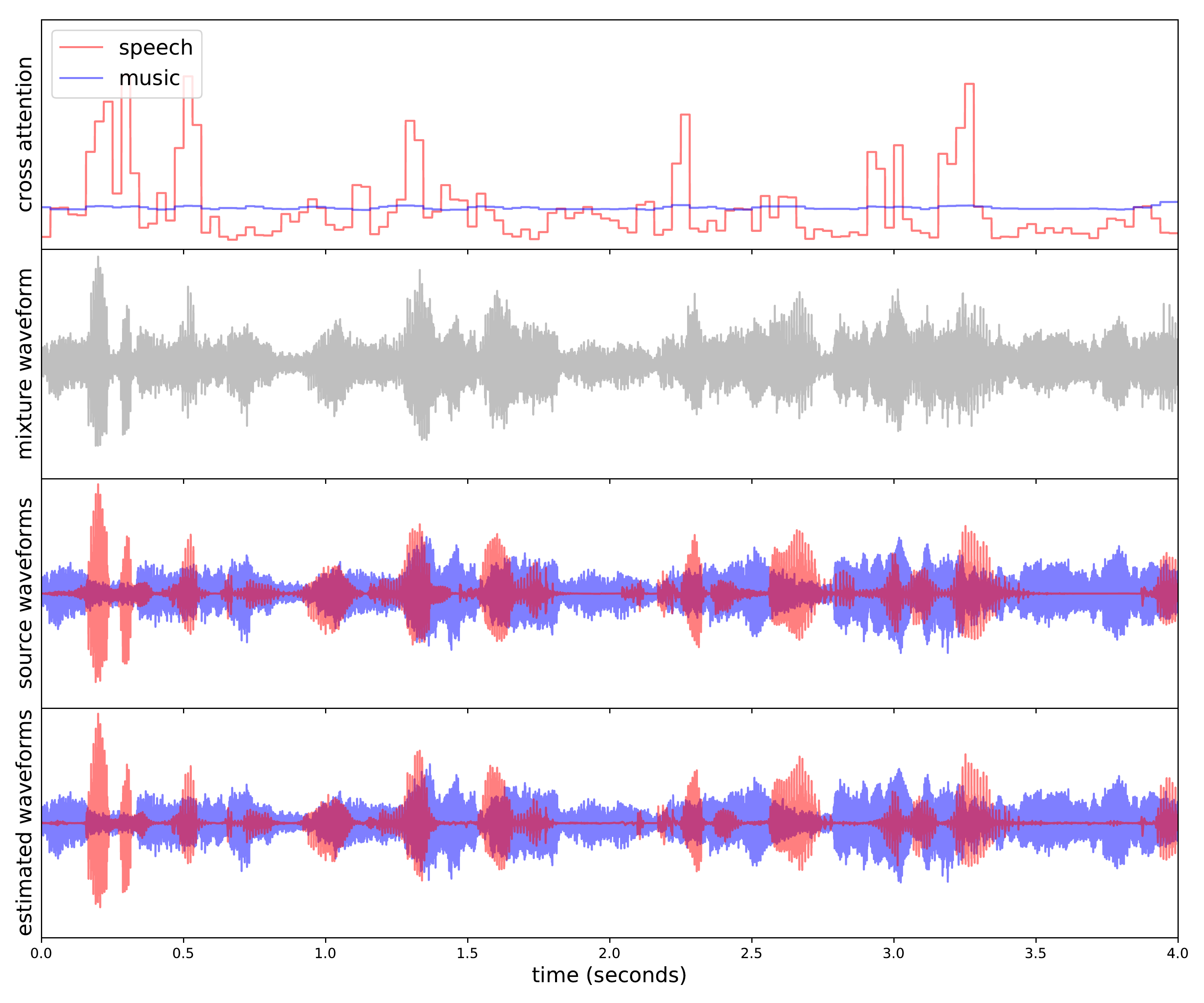} 
      \caption{A male speaker interfered by music}
      \label{fig:CrossAttn1}
      \end{subfigure}
\caption{Demonstrations of selective bottom-up cross attentions automatically learnt based on the speech stimuli.}
\end{figure*}

\subsection{Generalizability in Comparison to Supervised Learning}
\label{sec:study}

\begin{figure*}[t!]
 \centering
    \includegraphics[width=0.32\linewidth]{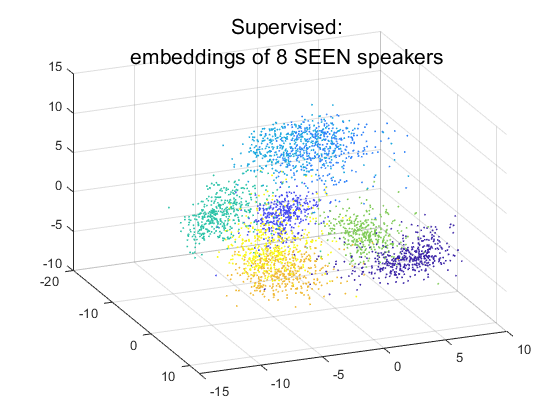}
    \includegraphics[width=0.32\linewidth]{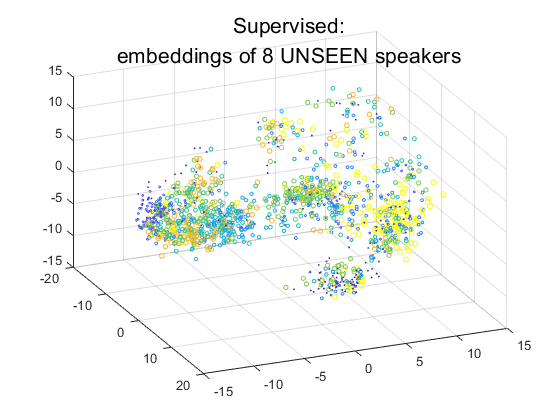}
    \includegraphics[width=0.32\linewidth]{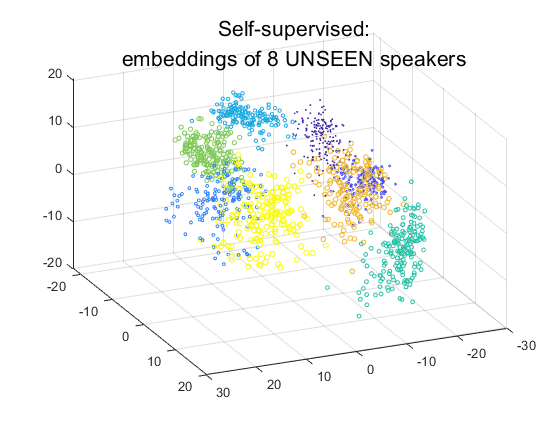}
\caption{3-D PCA of the speaker embeddings, left: from 8 random seen speakers, by supervised learning, middle: from 8 random unseen speakers, by supervised learning, and right: from the same 8 unseen speakers, by self-supervised learning.}
\label{fig:pca}
\end{figure*}

We analyze the generalizability of the deep representations learnt by the proposed self-supervised loss ``\text{Tune-InCE}" and compare it to that by the supervised approach ``\text{Tune-ID}". Fig. \ref{fig:pca} shows the projection of the speaker embeddings to a 3-D PCA space, where the same color indicates the same speaker. By supervised learning ``\text{Tune-ID}", despite the well discriminative embeddings learnt for the training ``seen" speakers (in the right in Fig. \ref{fig:pca}), the discriminative power dropped drastically for ``unknown" speakers, as shown in the middle of Fig. \ref{fig:pca}. On the contrary, our proposed approach ``\text{Tune-InCE}" can extract substantially discriminative embeddings for both seen and unknown speakers, as shown in the right of Fig. \ref{fig:pca}. It turns out the self-supervised learning purges our model from learning a trivial task of speaker identity prediction, but instead enforces it to learn deep representations with essential discriminative power and generalization capability.

\subsection{Speaker-Embedding-Based Permutation Computation for Training Speedup}
\label{sec:pit}
Noted that for computing the speech loss $\mathcal{L}_\text{SI-SNR}$, we need to assign the correct reference (or target) source signals; likewise, for computing the speaker loss in Eq. (\ref{eq:1.1}), we need to assign the correct corresponding speaker vectors. However, reference assigning has ambiguity since the model gives multiple outputs, one for each source, and they depend on the same input mixture. This problem is referred to as the permutation problem and has been properly solved via the utterance-level permutation invariant training (u-PIT) method \cite{yu2017permutation}.

Due to our proposed dual-attention mechanism, once the output permutation in one space is resolved, the permutation in the other space can be determined accordingly. During training, we start with using u-PIT to calculate the speech reconstruction loss $\mathcal{L}_\text{SI-SNR}$ in the speech-stimuli space. After an empirical number of epochs when the speaker vectors $\textbf{E}_i$ have reached to a relatively stable state, we switch to using u-PIT to calculate the speaker loss instead, and the determined assignment is used to ``steer" the signal reconstruction in the speech-stimuli space. In the ``steered" phase, the speech reconstruction loss becomes PIT-free and thus relieved from the relatively heavier computation burden.

\end{document}